\newtheorem{theorem}{Theorem}
\newtheorem{lemma}{Lemma}
\newtheorem{proposition}{Proposition}
\newtheorem{assumption}{Assumption}
\newtheorem{remark}{\bf Remark}
\def\phi{\varphi}
\def\({\left(}
\def\){\right)}
\def\bee{{\mathbf{e}}}
\def\bff{{\mathbf{f}}}
\def\bg{{\mathbf{g}}}
\def\b0{{\mathbf{0}}}
\newcounter{protocol}
\newtheorem{Example}{Example}
\title{Over-the-Air Multi-View Pooling for Distributed Sensing}
\author{Zhiyan~Liu, Qiao~Lan, Anders~E.~Kalør, Petar~Popovski, and Kaibin~Huang
\thanks{Z. Liu, Q. Lan and K. Huang are with Department of Electrical and Electronic Engineering at The University of Hong Kong (HKU), Hong Kong. A. E. Kalør and P. Popovski are with Department of Electronic Systems, Aalborg University, Aalborg, Denmark. A. E. Kalør is also affiliated with Dept. of EEE at HKU. The work of A. E. Kalør was supported by the Independent Research Fund Denmark (IRFD) under Grant 1056-00006B. The work of P. Popovski was supported by the Villum Investigator Grant “WATER” from the Velux Foundation, Denmark. Contact: K. Huang  (Email: huangkb@eee.hku.hk).}}
\newcommand{\removelatexerror}{\let\@latex@error\@gobble}
\begin{document}

\maketitle

\vspace{-15mm}
\begin{abstract}
    Sensing is envisioned as a key network function of the \emph{sixth-generation} (6G) mobile networks. \emph{Artificial intelligence} (AI)-empowered sensing {fuses} features of multiple sensing views from devices distributed in edge networks for the edge server to perform accurate inference. {This process, known as \emph{multi-view pooling}}, creates a communication bottleneck due to multi-access by many devices. To alleviate this issue, we propose a task-oriented simultaneous access scheme for distributed sensing called \emph{Over-the-Air Pooling} (AirPooling). The existing \emph{Over-the-Air Computing} (AirComp) technique can be directly applied to enable Average-AirPooling, which exploits the waveform superposition property of a multi-access channel to implement fast over-the-air averaging of pooled features. 
    {However, despite being most popular in practice, the over-the-air maximization, called Max-AirPooling, is not AirComp realizable given the fact that AirComp addresses only a limited subset of functions.} 
    We tackle the challenge by proposing the novel generalized AirPooling framework that can be configured to support both Max- and Average-AirPooling by controlling a configuration parameter {and extended to even other pooling functions}. The former is realized by adding to AirComp the designed pre-processing at devices and post-processing at the server. To characterize the \emph{End-to-End} (E2E) sensing performance in object recognition, the theory of classification margin is applied to relate the classification accuracy and the AirPooling error, which allows the latter to be a tractable surrogate of the former. Furthermore, the analysis reveals an inherent tradeoff of Max-AirPooling between the accuracy of the pooling-function approximation and the effectiveness of noise suppression. Using the tradeoff, we make an attempt to optimize the configuration parameter of Max-AirPooling, yielding a sub-optimal closed-form method of adaptive parametric control. Experimental results obtained on real-world datasets show that AirPooling provides sensing accuracies close to those achievable by the traditional digital air interface but dramatically reduces the communication latency, by up to an order of magnitude.
\end{abstract}

\begin{IEEEkeywords}
\vspace{-3mm}
Edge inference, distributed sensing, multiple access, over-the-air computation.
\end{IEEEkeywords}

\section{Introduction}
The \emph{sixth-generation} (6G) mobile networks will feature at least two new functions~\cite{Huawei2022}. 
One is the ubiquitous deployment of \emph{artificial intelligence} (AI) algorithms at the network edge, termed edge AI, to empower \emph{Internet-of-things} (IoT) applications~\cite{Bennis2019ProcIEEE,GX2020CM}. The other new function is large-scale distributed sensing via cross-network collaboration between edge devices~\cite{Liu2022JSAC,Huawei2022}.
The natural integration of edge AI and network sensing, known as \emph{AI-of-Things} (AIoT) sensing, combines the strengths of multi-view observations by sensors and the powerful prediction capabilities of deep neural network models to make sensing accurate and intelligent~{\cite{Shen2022CST}}. This provides a platform for automating  wide-ranging applications including e-healthcare, autonomous driving, smart cities, environment monitoring, and automated manufacturing.
{ Consider an intelligent transportation system with split inference for example. A roadside server aggregates and pools view features extracted by on-vehicle sub-models on nearby vehicles  and infers the current traffic situation using the server sub-model.}
However, the deployment of AIoT sensing is stymied by a communication bottleneck caused by the need of uploading high-dimensional features extracted from sensing data at many sensors to a server for aggregation and inference. 
This bottleneck motivates the current work that presents a task-oriented multi-access framework, called \emph{over-the-air pooling}
(AirPooling) of simultaneously transmitted features, which provides a scalable air interface for distributed AIoT sensing.

AIoT sensing builds on an architecture known as \emph{split inference}~\cite{Huawei2022}. Essentially, a trained model is split into a low-complexity sub-model at a device and a deep sub-model at a server. The former extracts feature maps from raw data while the latter performs inference on the uploaded features. 
Such an architecture provides resource-constrained edge devices access to large-scale AI models at servers (e.g., image recognition with tens-to-hundreds of object classes) while protecting their data ownership~\cite{Zhang2020CM}.
One research focus in split inference is  on task-oriented communications aiming to optimize the \emph{end-to-end} (E2E) inference throughput, accuracy, or latency. 
To this end, researchers have designed a range of relevant techniques~\cite{Chen2020TWC,Deniz2021JSAC,Zhang2020ICC,Zhang2020CM}. To overcome the communication constraints, the splitting point of a model can be adapted to the available bandwidth and latency requirement \cite{Chen2020TWC}. On the other hand, for optimization of E2E system performance, researchers have developed the popular approach of \emph{joint source-channel coding} for split inference. It consists of  a pair of jointly designed neural-network encoder and decoder, where the former is used at a transmitter to map the source (e.g., features of images) to channel symbols and the latter at a receiver to perform on noisy channel outputs joint channel decoding and computation (e.g., image reconstruction or inference) \cite{Deniz2021JSAC,Zhang2020ICC}. In addition, feature quantization is another aspect of communication-efficient split inference. Researchers have applied the algorithm of variational information bottleneck to designing a channel-adaptive feature quantizer that aims to minimize the communication overhead for classification given a time-varying channel~\cite{Zhang2020CM}.

The split inference for distributed AIoT sensing can be realized by adding to the popular \emph{Multi-View Convolutional Neural Network} (MVCNN) architecture an air interface between multiple sensors and a server (fusion center) \cite{Hang2015ICCV}. Compared with point-to-point split inference~\cite{Chen2020TWC,Deniz2021JSAC,Zhang2020ICC,Zhang2020CM}, the distinctive feature of MVCNN is \emph{multi-view pooling}, referring to the fusion of features extracted from different sensors' views into a global feature map that is fed into the server's inference model. Two common types of multi-view pooling operation are Average- and Max-Pooling that compute the average and maximum of distributed features, respectively \cite{Hang2015ICCV,LYC2020CVPR,Chen2021TPAMI}. When there are many sensors,  the existing techniques for distributed compression and scheduling can be useful in overcoming the communication bottleneck. Most recently, a method of distributed information bottleneck has been proposed to optimize the tradeoff between the communication rate and the distortion of the prediction results \cite{SJW2022TWC}. Regarding scheduling, a sensor selection protocol is proposed in~\cite{LYC2020CVPR} where a sensor is selected if its observation is sufficiently correlated with the query from the fusion center. In view of prior work, the  task-oriented multi-access designs for AIoT sensing are still a largely unexplored area. 

Contributing to this area, the proposed AirPooling is a simultaneous-access  technique for realizing multi-view pooling over-the-air by exploiting the waveform-superposition property of a multi-access channel. AirPooling belongs to a class of techniques called \emph{over-the-air computation} (AirComp) for efficient wireless data aggregation~\cite{GX2021WCM}. The main motivation driving AirComp research is to overcome channel distortion and noise such that data averaging or other computation functions can be implemented accurately over-the-air. This gives rise to a rich set of relevant techniques such as power control (see, e.g., \cite{ Xiaowen2020TWC,CaoXW2022JSAC}), sub-channel selection (see, e.g.,~\cite{GX2020TWC}), and multi-antenna beamforming (see, e.g.,~\cite{Shi2020TWC,GX2019IOTJ}). Most recently, AirComp sees growing popularity in its application to supporting efficient model/gradient aggregation in federated learning, known as over-the-air federated learning \cite{Deniz2020TSP,Eldar2021TSP,Shi2020TWC,GX2021TWC}. Researchers also proposed the use of AirComp to realize majority-voting over-the-air in a distributed inference system~\cite{Deniz2022ISIT}. AirPooling is a task-oriented AirComp technique targeting AIoT sensing. By designing AirPooling, we aim to address the following two open issues.
\begin{itemize}
\item The first issue is how to realize Max-Pooling using AirComp. The class of
AirComputable functions are termed \emph{nomographic functions} characterized by a summation form with different pre-processing of summation terms and post-processing of the summation~\cite{GX2019IOTJ}. Examples include averaging and geometric mean. Nevertheless, the maximum function underpinning the Max-Pooling is not a nomographic function and thus does not allow direct AirComp implementation. 
\item The second issue is how to design AirPooling targeting a specific computation task with an associated  E2E performance metric. In particular, considering the task of classification, which is common in AIoT sensing for object recognition, the task-oriented design of AirPooling should aim to improve the classification accuracy in the presence of channel hostility. Existing AirComp techniques lack E2E awareness as they have been  designed largely using the generic metric of \emph{mean squared error} (MSE) w.r.t. to the \emph{noiseless} case~{\cite{GX2019IOTJ}}. This makes AirPooling, the theme of this work, an uncharted area.
\end{itemize}

The key contributions and findings of the work are summarized as follows.
\begin{itemize}
    \item \textbf{Designing Generalized AirPooling}: We propose the novel technique of generalized AirPooling that includes Average-AirPooling and Max-AirPooling as two special cases. 
    The design leverages the following properties of of the generalized $p$-norm:
    \begin{equation}
        \Vert x\Vert_p=\left(\sum_{n=1}^N |x_n|^p\right)^{\frac{1}{p}}  \begin{cases}
            = \sum_{n=1}^{N} |x_n|, & p = 1, \\
            \rightarrow \max\limits_{n} |x_n|, & p \rightarrow \infty.
        \end{cases}
    \end{equation}
    It features an air-interface function  controlled by a so-called  \emph{configuration parameter}  (i.e., $p$)  such that the function  implements Average-AirPooling when the parameter is equal to one and approaches Max-AirPooling when the parameter grows. Then generalized AirPooling can be realized by decomposing the air-interface function into \emph{pre-processing} at devices and \emph{post-processing} at the server on top of the conventional AirComp.
    
    \item \textbf{E2E Performance Analysis of AirPooling:} Consider the popular E2E task of classification for object and pattern recognition. Though the direct analysis of E2E classification accuracy is intractable, we overcome the difficulty by proposing an indirect approach for which the classification margin theory is used to relate the accuracy loss and AirPooling error as induced by channel distortion. The result allows the AirPooling error to be used as a tractable surrogate of classification accuracy loss in subsequent analysis and designs. In particular, we derive an \emph{approximation-noise} tradeoff between the Max-AirPooling approximation accuracy and noise amplification, which is regulated by the configuration parameter. A similar tradeoff does not exist for Average-AirPooling that can be implemented directly with AirComp without additional pre-processing as Max-AirPooling, implying setting the parameter equal to one. 
    
    \item \textbf{Optimization of AirPooling:} The configuration parameter of AirPooling is optimized under the criterion of  minimum AirPooling error for given receive signal power. For tractability, the error is approximated by a  derived  upper bound. Then leveraging the preceding approximation-noise tradeoff, the near-optimal parameter  is derived in closed-form for Max-AirPooling. The result  favours a large parameter in the case of high transmit power to better approximate the pooling function but a small parameter   in the case of low transmit power to avoid noise amplification. On the other hand, we show that setting the parameter equal to one is optimal for Average-AirPooling regardless of the SNR. 

    \item \textbf{Experiments:} Experimental results using selected real-life datasets demonstrate that Max-AirPooling can achieve  higher  sensing accuracies than Average-AirPooling, supporting the need of developing generalized AirPooling. Furthermore, compared with a digital air interface, AirPooling is shown to reduce air-latency by orders-of-magnitude while achieving comparable sensing accuracies.  
   
\end{itemize}

The remainder of this paper is organized as follows. The sensing and communication models are introduced in Section~\ref{sec: system_model}. The principle design  of generalized AirPooling is presented in Section~\ref{sec: airpooling_preliminaries}. Performance analysis comprising E2E performance metrics and tradeoffs is developed in Section~\ref{Sec: Performance-transalation} while the configuration parameters are optimized for Max-AirPooling and Average-AirPooling in Section~\ref{sec: optimization}. 
Section~\ref{sec: experiments} reports the numerical evaluation of AirPooling, followed by concluding remarks in Section~\ref{sec: conclusion}. 

\section{System Model}
\label{sec: system_model}
As illustrated in Fig.~\ref{fig: system_diagram}, we consider an AIoT sensing system where $K$ sensors, wirelessly connected to an edge server, cooperate to complete an inference task (e.g., object recognition). In a sensing round, each sensor acquires its view (e.g., an image) of a common object from a particular perspective and extracts features using the sensor model. The server then aggregates the features uploaded by sensors to infer the classification label of the object. Models and metrics are described in sub-sections.

\subsection{Distributed Sensing Model}
\label{subsec: sensing model}
The distributed sensing system is based on the mentioned MVCNN architecture. A pre-trained {sensor model} is deployed on each sensor, say sensor $k$, that takes the captured image as input and outputs a feature map comprising $N$ real features, denoted in its vectorized form as $\bff_k \in \mathbb{R}^N$. We consider the popular deep learning architecture where features are  outputs of the non-negative \emph{ReLU} or \emph{sigmoid} functions, and thus $f_{k,n}\geq 0$ for $n=1,\ldots,N$. At the server, the local feature maps $\{\bff_k\}_{k=1}^K$ undergo the view-pooling operation, i.e., being aggregated into a single feature map $\bg \in \mathbb{R}^N$ before $\bg$ is fed into the pre-trained {server model} to obtain the inference result. Specifically, 
the feature pooling is materialized via \emph{average-pooling} or \emph{max-pooling} with the $n$-th pooled feature given as
\begin{equation}\label{perfect-pooling}
    g_{{\sf avg}, n}  \triangleq\frac{1}{K}\sum_{k=1}^{K} f_{k,n} , \quad
    g_{{\sf max}, n}  \triangleq\max\limits_{k} f_{k,n}.
\end{equation}

The AirPooling implementation of the above operations is designed in the next section, while some needed notation and metrics are defined as follows. Define $\hat{\mathbf{g}}\triangleq\left[\hat{g}_1, \cdots, \hat{g}_N\right]$ where $\hat{g}_n$ denotes the $n$-th AirPooled feature. To quantize its channel distortion and the resultant effect on sensing performance, we introduce two metrics. First, the \emph{AirPooling error}, ${D}_n$, of an AirPooled feature $\hat{g}_n$ is defined using the MSE of that feature against its ground truth $g_n$, $D_n \triangleq \mathbb{E}\left[(\hat{g}_n - g_n )^2\right]$,
where $g_n$ is either $g_{{\sf avg}, n}$ or $g_{{\sf max}, n}$ as appropriate. Second, the \emph{classification accuracy} (i.e., the rate of correct recognition) at the server inference model, into which AirPooled features are fed, is defined as $R_{\sf AP} \triangleq \mathbb{E}\left[\mathcal{I}\left(\hat{\ell}=\ell\right)\right]$ with the expectation taken over both channel noises and data samples, where $\mathcal{I}(\cdot)$ denotes the indicator function, $\hat{\ell}$ the predicted label, and $\ell$ the ground-truth label of a data sample, respectively.

\begin{figure}
    \centering
    \includegraphics[width=0.99\textwidth]{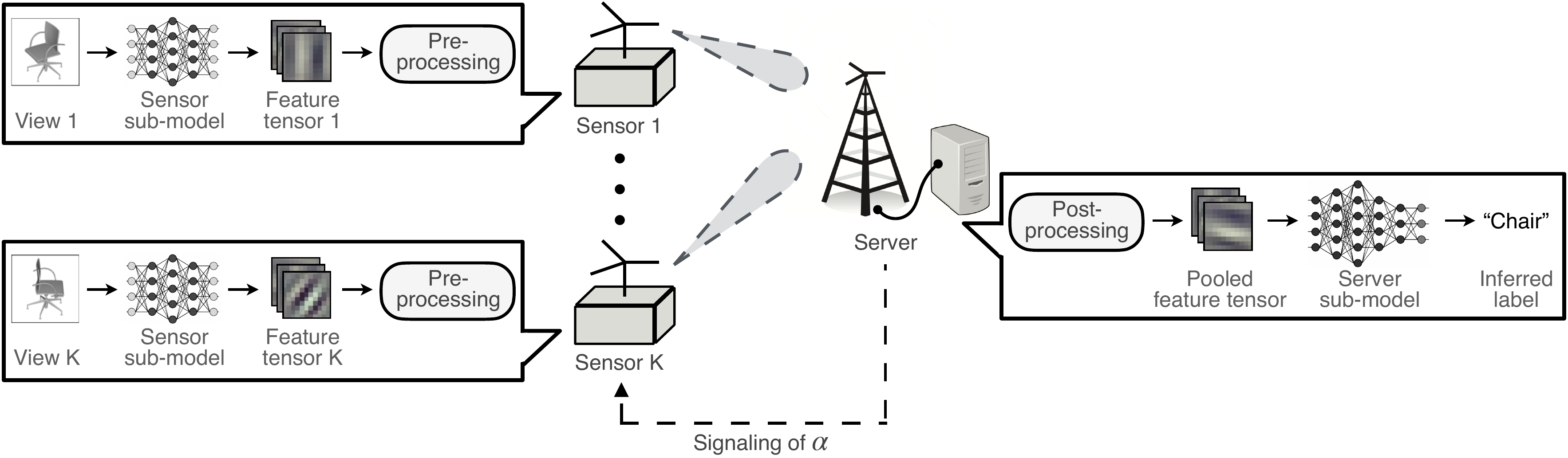}
    \caption{Distributed sensing system with AirPooling.}
    \label{fig: system_diagram}
\end{figure}

\subsection{Communication Model}
\label{sec: commun_model}
In the considered system in Fig.~\ref{fig: system_diagram}, the server and sensors are equipped with single antennas. The acquisition of the $N$ pooled features, i.e., $\{g_n\}$, is conducted sequentially using $N$ symbol durations. We assume synchronization between sensors at the symbol level. Consider the $n$-th symbol duration, where $K$ local features, $\{f_{k,n}\}_{k=1}^{K}$, will be aggregated to obtain $g_n$. The channel gains between sensors and the server 
are assumed known at both sides. We further assume \emph{independent and identically distributed} (i.i.d.)  block fading channels. For ease of notation, the index $n$ is omitted in the following expositions, which are valid for an arbitrary index. To realize AirPooling, all the sensors simultaneously transmit single data symbols, which results from pre-processing of local features (elaborated in Section~\ref{subsec: generalized_airpooling}). The aggregated symbol is given by
\begin{equation}
    \label{eqn: symbol-aggregation}
    y=\sum_{k=1}^K h_{k} p_{k} s_{k}+z,
\end{equation}
where $h_{k}$ is the channel gain between the $k$-th sensor and server, $s_{k}$ the symbol transmitted by the $k$-th sensor, $p_{k}$ the precoding coefficient, and $z\sim \mathcal{C}\mathcal{N}(0,\sigma^2)$ the additive white Gaussian noise with power $\sigma^2$, respectively. Channel inversion precoding is adopted as given by $p_k=\frac{\sqrt{P_\mathsf{rx}}}{h_k}$,
where $P_\mathsf{rx}$ denotes the receive power level coordinated by the server~\cite{GX2020TWC}.
In the case of deep fade, the scheme of truncated channel inversion can be used instead to avoid excessive transmit power consumption~\cite{GX2020TWC}.
The transmit power constraint for sensor $k$ is given by $\mathbb{E}[|p_k s_k|^2] \leq P_0$, where $P_0$ is the power budget at each sensor.

\section{Over-the-Air Multi-View Pooling}
\label{sec: airpooling_preliminaries}
\subsection{Generalized AirPooling}
\label{subsec: generalized_airpooling}

The proposed generalized AirPooling builds on the AirComp operation in~\eqref{eqn: symbol-aggregation} to include additional {pre- and post-processing}, thereby enabling reconfigurability to realize Average- or Max-AirPooling. The details as well as the properties of generalized AirPooling are provided as follows.

At each sensor, say sensor $k$, the local feature $f_k$ is pre-processed with a power function to generate the pre-processed feature value $v_k$, as given by $v_k =  f_k^{\alpha}$, 
where $\alpha$ denotes the tunable configuration parameter. Following the literature~\cite{GX2020TWC,Xiaowen2020TWC},
to facilitate transmit power control, the transmitted symbol of sensor $k$ is 
modulated via normalizing $v_k$: $ s_k =
    \frac{1}{\nu_\alpha}(v_k - \eta_\alpha),$
where the normalization parameters $\eta_\alpha\triangleq\mathbb{E}[v_k]$ and $\nu_\alpha^2\triangleq\mathbb{E}[(v_k-\eta_\alpha)^2]$ depend on both $\alpha$ and the underlying feature distribution.
The received symbol at the edge server is given by
\begin{equation}
    y=\sum_{k=1}^K \sqrt{P_\mathsf{rx}} s_k + z,
\end{equation}
Under sensors' transmit power constraints, the received power level is constrained by $P_\mathsf{rx} \leq \bar{P}$, where $\bar{P}\triangleq\frac{P_0 }{\mathbb{E}\left[ |h_k|^{-2}\right]}$. 
The aggregated feature before post-processing, $\hat{v}$, is then obtained by de-normalization
\begin{align}
    \label{eqn: denormalization}
    \hat{v}  =\frac{{\nu}_\alpha}{\sqrt{P_\mathsf{rx}}} y + \eta_\alpha K  = \sum_{k=1}^K v_k + \xi,
\end{align}
where $\xi\triangleq\nu_\alpha z/{\sqrt{P_\mathsf{rx}}}$ is the equivalent aggregation noise which is zero-mean Gaussian distributed with variance $\sigma_\xi^2=\sigma^2\nu_\alpha^2/P_{\sf rx}$.
Last, the server post-processes $\hat{v}$ to estimate the pooled feature value as given by
\begin{equation}\label{post_processing}
    \hat{g} = \left[\left(\frac{\hat{v}}{\beta}\right)^+\right]^{1/{\alpha}} = \left[\frac{1}{\beta}\left(\sum_{k=1}^K  f^{\alpha}_k+\xi\right)^+\right]^{1/{\alpha}},
\end{equation}
where the \emph{ramp function} $(\cdot)^+$ is defined as $(\cdot)^+\triangleq \max{\{\cdot,0\}}$, and $\beta$ is a tunable parameter termed the \emph{post-processing parameter}. {Here the ramp function reduces the AirPooling error by zeroing negative observations since all feature values are non-negative.} To facilitate performance analysis in the sequel, it is useful to introduce the noise-free version of $\hat{g}$ ($\xi=0$), denoted as $\tilde{g}$ and given as
$ \tilde{g} = \left(\frac{1}{\beta}\sum_{k=1}^K f^{\alpha}_k\right)^{1/{\alpha}}.$

Next, we prove the reconfigurability of the generalized AirPooling in~\eqref{post_processing}. To this end, the optimal setting of the parameter, $\beta$, in~\eqref{post_processing}, is characterized as follows.
\begin{lemma} 
    \label{lemma: post_processing_factor}
    \emph{When the channel noise is negligible, to minimize the AirPooling error, the optimal post-processing parameter, $\beta^*$, should be set for Average-AirPooling as $\beta^{*}=K$; and for Max-AirPooling as}
    \begin{equation} \label{beta_n}
        \beta^* = \left(\frac{\mathbb{E}\left[\Vert\mathbf{f}\Vert_{\alpha}^2\right]}{\mathbb{E}\left[f_{\max}\Vert\mathbf{f}\Vert_{\alpha}\right]}\right)^{\alpha},
    \end{equation}
    \emph{where $\mathbf{f}\triangleq[f_1,f_2,\ldots,f_K]$, $f_{\max}\triangleq\max{\{f_k\}}$, and $\Vert\cdot\Vert_{\alpha}$ denotes the $\ell_\alpha$-norm. Moreover, $\beta^*$ is bounded as $1\leq \beta^*\leq K$. }
\end{lemma}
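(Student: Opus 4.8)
The plan is to handle the two pooling modes separately, reducing each to an elementary scalar optimization; throughout, negligible channel noise means $\xi=0$, so the non-negativity of the aggregated sum keeps the ramp function inactive and gives $\hat g=\tilde g$.

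\textbf{Average-AirPooling.} Setting $\alpha=1$ gives $\tilde g=\frac{1}{\beta}\sum_{k=1}^K f_k$, which coincides identically with $g_{\sf avg}=\frac{1}{K}\sum_{k} f_k$ precisely when $\beta=K$. The error $D=\mathbb{E}[(\tilde g-g_{\sf avg})^2]$ then equals zero, and since $D\geq 0$ for every $\beta$, the choice $\beta^*=K$ is optimal.

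\textbf{Max-AirPooling.} Here the target is $g=f_{\max}$ and the estimate is $\tilde g=\beta^{-1/\alpha}\Vert\mathbf{f}\Vert_\alpha$. The key step is the substitution $c\triangleq\beta^{-1/\alpha}$: since $\beta\mapsto c$ is a strictly decreasing bijection of $(0,\infty)$ onto itself, minimizing $D$ over $\beta>0$ is equivalent to minimizing over $c>0$. Expanding,
\[
D(c)=c^2\,\mathbb{E}\!\left[\Vert\mathbf{f}\Vert_\alpha^2\right]-2c\,\mathbb{E}\!\left[f_{\max}\Vert\mathbf{f}\Vert_\alpha\right]+\mathbb{E}\!\left[f_{\max}^2\right],
\]
a strictly convex quadratic in $c$ with positive leading coefficient. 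Its first-order condition yields the unique minimizer $c^*=\mathbb{E}[f_{\max}\Vert\mathbf{f}\Vert_\alpha]/\mathbb{E}[\Vert\mathbf{f}\Vert_\alpha^2]$, and converting back via $\beta^*=(c^*)^{-\alpha}$ reproduces \eqref{beta_n}.

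\textbf{The bounds.} For $1\leq\beta^*\leq K$ I would use the elementary comparison, valid because all features are non-negative and $\alpha\geq 1$,
\[
f_{\max}\;\leq\;\Vert\mathbf{f}\Vert_\alpha\;\leq\;K^{1/\alpha}f_{\max}.
\]
Multiplying through by the non-negative factor $\Vert\mathbf{f}\Vert_\alpha$ and taking expectations gives $\mathbb{E}[f_{\max}\Vert\mathbf{f}\Vert_\alpha]\leq\mathbb{E}[\Vert\mathbf{f}\Vert_\alpha^2]\leq K^{1/\alpha}\mathbb{E}[f_{\max}\Vert\mathbf{f}\Vert_\alpha]$, so the ratio defining $\beta^*$ in \eqref{beta_n} lies in $[1,K^{1/\alpha}]$; raising to the power $\alpha$ yields $1\leq\beta^*\leq K$. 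I do not expect a genuine obstacle, as everything collapses to a one-dimensional quadratic and two textbook norm inequalities. The only point needing care is the change of variable: one must confirm that $\beta\mapsto c=\beta^{-1/\alpha}$ is monotone and surjective onto $(0,\infty)$ so that the unconstrained scalar minimizer $c^*$ corresponds to a valid $\beta^*>0$, and that $c^*>0$---which holds because each expectation appearing is of a non-negative random variable with strictly positive mean.
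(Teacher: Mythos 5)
Your proof is correct and follows essentially the same route as the paper's: the same substitution $u=\beta^{-1/\alpha}$ reducing the Max-AirPooling error to a convex quadratic whose minimizer gives \eqref{beta_n}, and the same norm comparison $f_{\max}\leq\Vert\mathbf{f}\Vert_\alpha\leq K^{1/\alpha}f_{\max}$ multiplied by $\Vert\mathbf{f}\Vert_\alpha$ and averaged to get the bounds. Your added remarks on the bijectivity of the change of variable and the inactivity of the ramp function when $\xi=0$ are correct but not needed beyond what the paper already does implicitly.
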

\begin{proof}
    See Appendix~\ref{proof: best_beta}.
\end{proof}
Using this result, the said reconfigurability of the generalized AirPooling can be proved below.
\begin{theorem}
    \label{theorem: approximation_capability}
    \emph{When channel noise is negligible ($\xi=0$), AirPooling is capable of errorless implementation of either Average-Pooling and Max-Pooling:}
    \begin{equation}
        \label{eqn: approximation_capability}
         \begin{cases}
    \tilde{g} \rightarrow g_{\sf max}, & \beta = \beta^*,\quad \alpha\rightarrow\infty, \\
    \tilde{g} = g_{\mathsf{avg}},  & \beta=K,\quad \alpha=1,
        \end{cases}
    \end{equation}
\emph{where $\tilde{g}$ is given as $ \tilde{g} = \left(\frac{1}{\beta}\sum_{k=1}^K f^{\alpha}_k\right)^{1/{\alpha}}$ and $\beta^*$ is given in Lemma~\ref{lemma: post_processing_factor}.}  
\end{theorem}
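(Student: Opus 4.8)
The plan is to treat the two regimes in~\eqref{eqn: approximation_capability} separately: the Average case holds by direct substitution, whereas the Max case reduces, via Lemma~\ref{lemma: post_processing_factor}, to a squeeze argument on the normalization factor. For the Average-Pooling case ($\alpha=1$, $\beta=K$), I would substitute these values into $\tilde{g} = \left(\frac{1}{\beta}\sum_{k=1}^K f_k^{\alpha}\right)^{1/\alpha}$, obtaining $\tilde{g} = \frac{1}{K}\sum_{k=1}^K f_k$, which is exactly $g_{\sf avg}$ by definition~\eqref{perfect-pooling}. No limiting argument is needed and the identity is exact.

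For the Max-Pooling case ($\alpha\to\infty$, $\beta=\beta^*$), the crucial step is to rewrite $\tilde{g} = (\beta^*)^{-1/\alpha}\Vert\mathbf{f}\Vert_{\alpha}$, separating the $\ell_\alpha$-norm from its normalization. I would then combine two ingredients. First, because all features satisfy $f_k\geq 0$, the standard $p$-norm limit gives $\Vert\mathbf{f}\Vert_{\alpha}\to \max_k f_k = f_{\max} = g_{\sf max}$ as $\alpha\to\infty$. Second, observing from~\eqref{beta_n} that $\beta^* = c_\alpha^{\alpha}$ with $c_\alpha \triangleq \mathbb{E}[\Vert\mathbf{f}\Vert_{\alpha}^2]/\mathbb{E}[f_{\max}\Vert\mathbf{f}\Vert_{\alpha}]$, we have $(\beta^*)^{1/\alpha} = c_\alpha$. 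The two-sided bound $1\leq\beta^*\leq K$ from Lemma~\ref{lemma: post_processing_factor} then yields $1\leq c_\alpha\leq K^{1/\alpha}$, and since $K^{1/\alpha}\to 1$, the squeeze theorem forces $(\beta^*)^{1/\alpha}\to 1$.

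Putting the pieces together, for each realization of $\mathbf{f}$ we obtain $\tilde{g} = \Vert\mathbf{f}\Vert_{\alpha}/c_\alpha \to f_{\max} = g_{\sf max}$, which establishes the claim. The only delicate point is the justification that $(\beta^*)^{1/\alpha}\to 1$, and I expect this to be effortless once Lemma~\ref{lemma: post_processing_factor} supplies the bound, which is precisely engineered to drive the squeeze. Should one prefer to avoid invoking that bound, the fallback is a dominated-convergence argument: since $f_{\max}\leq\Vert\mathbf{f}\Vert_{\alpha}\leq K^{1/\alpha}f_{\max}$ for $\alpha\geq 1$, both $\mathbb{E}[\Vert\mathbf{f}\Vert_{\alpha}^2]$ and $\mathbb{E}[f_{\max}\Vert\mathbf{f}\Vert_{\alpha}]$ converge to $\mathbb{E}[f_{\max}^2]$, giving $c_\alpha\to 1$ directly.
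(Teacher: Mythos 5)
Your proposal is correct and follows essentially the same route as the paper: the Average case by direct substitution, and the Max case by a squeeze argument built on the bound $1\leq\beta^*\leq K$ from Lemma~\ref{lemma: post_processing_factor} together with $f_{\max}\leq\Vert\mathbf{f}\Vert_{\alpha}\leq K^{1/\alpha}f_{\max}$. The paper sandwiches $\tilde{g}$ directly between $K^{-1/\alpha}f_{\max}$ and $K^{1/\alpha}f_{\max}$, whereas you factor $\tilde{g}=(\beta^*)^{-1/\alpha}\Vert\mathbf{f}\Vert_{\alpha}$ and take the two limits separately, but this is only a cosmetic rearrangement of the same ingredients.
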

\begin{proof} 
    See Appendix~\ref{app: proof_approximation_capability}.
\end{proof}
The above result shows generalized AirPooling's versatility in accurate over-the-air functional approximation when channel noise is negligible. But in the presence of noise, its configuration parameter needs to readjusted to balance functional approximation and reining in the noise effect. This is the main to address in the following sections.
Last, the generalized AirPooling protocol is summarized in Algorithm~\ref{algorithm: airpooling}.

\begin{algorithm}[t]
\caption{AirPooling Protocol}
\label{algorithm: airpooling}
\textbf{Input:} $P_0$, $K$, $\mathbb{E}[g^2_{\max}|K]$\;
\textbf{(Configuration)} The AP determines and broadcasts the configuration parameter $\alpha$ and the coordinated receive power level $P_{\sf rx}$\;
\textbf{for} feature dimension $n=1,2,\cdots,N$ \textbf{do}\\
    \begin{enumerate}
        \item[1:]\!\!\!\! {(Sensors transmission)} Each sensor (sensor $k$) exploits the {function} $v_{k,n} = f_{k,n}^{\alpha}$ to pre-process its local feature $f_{k,n}$ and simultaneously transmit the normalized symbols;
        \item[2:]\!\!\!\! {(Post-processing at the server)} The server invokes the post-processing function~\eqref{post_processing} to obtain an estimation of pooled feature, $\hat{g}_n$;
        
    \end{enumerate}
    
\textbf{end for} \\    
\textbf{Output:} $\hat{g}_1, \hat{g}_2, \cdots, \hat{g}_{N}$.
\end{algorithm}

\subsection{Other AirPooling Functions}
Besides Max- and Average-Pooling, AirPooling can be extended to a number of other functions described as follows.
\begin{itemize}
    \item \emph{Weighted-sum Pooling:} Sensors scale local features with weights broadcast from the server, then execute Average-AirPooling of weighted features.
    \item \emph{Concatenation Pooling:} For a server inference model whose layer right after concatenation is a linear one such as dense and convolution layers, concatenation can be recast to average-pooling~\cite{FU2019Neurocomp}. The recasting essentially relies on relocating different branches of neurons in that linear layer to corresponding sensors. 
    \item \emph{Average-Max Hybrid Pooling:} Varying the configuration parameter $\alpha$ offers the opportunity of an in-between point transiting between two extreme cases $\alpha=1$ and $\alpha \rightarrow {\infty}$, e.g., square-root pooling ($\alpha=2$) meant for recognition problems in~\cite{Thomas2009CVPR}. Interestingly, such hybrid pooling is reported to boost sensing performance for some feature extractors pre-trained even without pooling functions~\cite{LeCun2010ICML}.
\end{itemize}

\section{Classification Accuracy and AirPooling Error}
\label{Sec: Performance-transalation}
In this section, we analyze the relationship between the two performance metrics, namely, classification accuracy and AirPooling error, so that the latter can be used as a surrogate of E2E performance metric to allow tractable optimization of AirPooling. Using the derived bound, we characterize the fundamental tradeoffs in AirPooling.

\subsection{Relationship between Classification Accuracy and AirPooling Error}
Given a trained MVCNN model with noiseless multi-view pooling, its classification accuracy, $R_0$, is the probability that an input sample is correctly classified. To quantify the metric 
we leverage the concept of  \emph{classification margin}~\cite{Jure2017TSP}. Let $d(\cdot,\cdot)$ denote the Euclidean distance in the feature space $\mathbb{R}^N$. The (trained) server model has an intrinsic classification margin $\Delta$ in the feature space, which refers to the infimum of the distance from an arbitrary perfectly pooled feature vector, $\mathbf{g}$, to the classification boundary. Consequently, if $\bg$ is correctly classified, then $d(\bg,\hat{\bg})<\Delta$ is a sufficient condition for correct classification of the perturbed feature vector $\hat{\bg}$ obtained by AirPooling.

\begin{lemma}\label{lemma: acc_lb}
\emph{The classification accuracy of the server model with AirPooling, denoted as $R_{\sf AP}$, is lower bounded as}
\begin{eqnarray}
     R_{\sf AP} 
     &{\geq}&  R_0 \mathsf{Pr}[\Vert\bee\Vert_2 <\Delta] \label{eqn: accuracy_lowerbound_0}\\ 
     & {\geq}& R_0\left(1-\frac{ {D}_{\Sigma} }{\Delta^2}\right)  \triangleq  R_{\sf AP}^*,
     \label{eqn: accuracy_lowerbound}
\end{eqnarray}
\emph{where $\bee=\hat{\bg}-\bg$ is the error vector and  ${D}_{\Sigma}\triangleq\sum_{n=1}^{N} D_n$ is the sum of AirPooling errors over all $N$ feature dimensions. }
\end{lemma}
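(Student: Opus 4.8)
The plan is to establish the two inequalities in \eqref{eqn: accuracy_lowerbound_0}--\eqref{eqn: accuracy_lowerbound} separately, since the first is a probabilistic consequence of the margin property while the second is a concentration bound on the error vector.

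For the first inequality I would argue by event inclusion. Let $A$ denote the event that the noiseless feature vector $\bg$ is correctly classified by the server model, and let $B$ denote the event $\{\Vert\bee\Vert_2<\Delta\}$ with $\bee=\hat{\bg}-\bg$, so that $\Vert\bee\Vert_2=d(\bg,\hat{\bg})$. The stated margin property asserts precisely that whenever $\bg$ is correctly classified and $d(\bg,\hat{\bg})<\Delta$, the perturbed vector $\hat{\bg}$ is also correctly classified; in set terms, $A\cap B\subseteq\{\hat{\ell}=\ell\}$. Taking probabilities gives $R_{\sf AP}=\mathsf{Pr}[\hat{\ell}=\ell]\geq\mathsf{Pr}[A\cap B]$. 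To reduce this to the product form $R_0\,\mathsf{Pr}[\Vert\bee\Vert_2<\Delta]$, I would invoke the modeling assumption that the channel noise $z$ (hence the AirPooling perturbation driving $B$) is statistically independent of the data sample that determines whether $A$ occurs, so that $\mathsf{Pr}[A\cap B]=\mathsf{Pr}[A]\,\mathsf{Pr}[B]=R_0\,\mathsf{Pr}[\Vert\bee\Vert_2<\Delta]$, recalling that $R_0=\mathsf{Pr}[A]$ by definition.

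For the second inequality it suffices to show $\mathsf{Pr}[\Vert\bee\Vert_2<\Delta]\geq 1-D_\Sigma/\Delta^2$, after which multiplying by $R_0\geq 0$ yields the claim. I would apply Markov's inequality to the nonnegative random variable $\Vert\bee\Vert_2^2$, giving $\mathsf{Pr}[\Vert\bee\Vert_2\geq\Delta]=\mathsf{Pr}[\Vert\bee\Vert_2^2\geq\Delta^2]\leq\mathbb{E}[\Vert\bee\Vert_2^2]/\Delta^2$. The numerator is then identified with $D_\Sigma$ by expanding the squared norm coordinatewise and using linearity of expectation: $\mathbb{E}[\Vert\bee\Vert_2^2]=\sum_{n=1}^N\mathbb{E}[(\hat{g}_n-g_n)^2]=\sum_{n=1}^N D_n=D_\Sigma$. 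Passing to the complementary event completes the bound.

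The routine parts are the Markov step and the coordinatewise expansion; the step I expect to require the most care is the factorization $\mathsf{Pr}[A\cap B]=R_0\,\mathsf{Pr}[\Vert\bee\Vert_2<\Delta]$ in the first inequality. Strictly, $B$ depends on the data through $\bg$ as well as on the noise, so independence of $A$ and $B$ is not automatic; making the product form rigorous relies on the assumption that the noise is independent of the data and on treating the correct-classification event as decoupled from the perturbation magnitude. I would therefore state this independence assumption explicitly at the outset so that the clean product bound $R_0\,\mathsf{Pr}[\Vert\bee\Vert_2<\Delta]$ is justified; absent it, one would instead only obtain the weaker Fr\'echet-type bound $R_0+\mathsf{Pr}[B]-1$.
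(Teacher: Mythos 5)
Your proof is correct and follows essentially the same route as the paper's: the margin property yields the first inequality, and Markov's inequality applied to $\Vert\bee\Vert_2^2$, expanded coordinatewise so that $\mathbb{E}[\Vert\bee\Vert_2^2]=\sum_{n=1}^N D_n=D_\Sigma$, yields the second. Your explicit flagging of the independence assumption needed to pass from $\mathsf{Pr}[A\cap B]$ to the product $R_0\,\mathsf{Pr}[\Vert\bee\Vert_2<\Delta]$ addresses a step the paper's proof asserts without comment, so that extra care is warranted rather than superfluous.
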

\begin{proof}
    See Appendix~\ref{proof: acc_lb}.
\end{proof}

The two inequalities in Lemma~\ref{lemma: acc_lb} suggest two tractable methods for achieving a target classification accuracy via analyzing the AirPooling error. First, using the inequality in~\eqref{eqn: accuracy_lowerbound_0}, a sufficient condition for achieving a target accuracy, denoted as $R_{\sf target}$ is 
\begin{equation}
    \label{eqn: sufficient_cond_1}
    \mathsf{Pr}\left( \Vert \mathbf{e} \Vert_2 < \Delta \right)\geq \frac{R_{\sf target}}{R_0} . 
\end{equation}
Second, using the inequality in~\eqref{eqn: accuracy_lowerbound}, the other sufficient condition is  
\begin{equation}
    \label{eqn: sufficient_cond_2}
    {D}_{\Sigma} \leq \Delta^2 \left( 1 - \frac{R_{\sf target}}{R_0} \right),
\end{equation}
that is simpler than the preceding one but looser.

\begin{Example}
\emph{Consider Average-AirPooling over $K$ sensors with $\beta=K$ , $\alpha=1$, and $\hat{g}_n=g_n+\frac{\xi_n}{K}$ for example. Thus we have $e_n=\frac{\xi_n}{K}$. The distance between $\hat{\bg}$ and $\bg$ is given by $\Vert\bg-\hat{\bg}\Vert_2= \Vert\bee\Vert_2$ which follows a $\chi$-distribution with $N$ degrees of freedom. Invoking the {cumulative distribution function} of $\chi$-distributions and the inequality in~\eqref{eqn: accuracy_lowerbound_0}, $R_{\sf AP} \geq  R_0 P(\frac{N}{2},\frac{K^2\Delta^2}{2\sigma_\xi^2})=R_0 P(\frac{N}{2},\frac{N\Delta^2}{2{D}_{\Sigma}})$,
where $P(k,x)=\frac{\gamma(k,x)}{\Gamma(k)}$ is the regularized gamma function with $\gamma(k,x)=\int_0^x t^{k-1}e^{-t}dt$ denoting the lower incomplete gamma function and $\Gamma(k)=\int_0^\infty t^{k-1}e^{-t}dt$ denoting the gamma function. Since $P(k,x)$ is a monotone increasing function of $x$, the sufficient condition in~\eqref{eqn: sufficient_cond_1} is particularized for the current case as ${D}_{\Sigma} \leq \frac{N\Delta^2}{2P^{-1}\left(N/2,{R_{\sf target}}/{R_0}\right)}$, where $P^{-1}(k,x)$ is the inverse function of $P(k,x)$.}
\end{Example}

For the generalized AirPooling comprising Max-AirPooling, its error is obscured by two confounding effects, namely channel noise perturbation and the function-approximation error. This ambiguity renders the characterization of $D_n$ complicated and motivates us to overcome this difficulty in the sequel.

\subsection{Bounding AirPooling Error} 
In the previous sub-section, the classification accuracy is related to the AirPooling error. Here, we analyze the error by decomposing it into components associated with different error types. Consider $D_n$ for an arbitrary $n$ and define it as a function of $\alpha$, $D_n=D(\alpha)$. 
To this end, we define two useful functions $D_{\sf chan}(\alpha)\triangleq \mathbb{E}[|\hat{g}-\tilde{g}|^2]$ and $D_{\sf appr}(\alpha)\triangleq \mathbb{E}[|\tilde{g}-g|^2]$, where $\tilde{g}$ is the pooled feature assuming negligible channel noise. The first function, $D_{\sf chan}(\alpha)$, accounts for the perturbation due to channel noise, termed the \emph{noise-perturbation error}. The other function, $D_{\sf appr}(\alpha)$, represents the \emph{function-approximation error} in approximating the max-operator in~\eqref{perfect-pooling} by summation with pre- and post-processing (see~\eqref{post_processing}).
\begin{lemma}
\label{proposition_decompose}
\emph{The AirPooling error can be upper bounded as
\begin{equation}
    \label{eqn: decompose_airpoolingerror}
    D(\alpha)\leq c_0 \left[ D_{\sf chan}(\alpha) + D_{\sf appr}(\alpha)\right], 
\end{equation}
where $c_0=1$ for Average-Pooling and $c_0=2$ for Max-Pooling. }
\label{prop: mse_decomposition}
\end{lemma}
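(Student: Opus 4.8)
The plan is to obtain the bound from a single algebraic decomposition of the total error into a channel-noise component and a function-approximation component, and then to fix the constant $c_0$ separately in the Max- and Average-Pooling cases. I would start from the pointwise identity
\begin{equation}
  \hat{g} - g = \left(\hat{g} - \tilde{g}\right) + \left(\tilde{g} - g\right),
\end{equation}
obtained simply by adding and subtracting the noise-free pooled feature $\tilde{g}$. This identity holds verbatim despite the ramp nonlinearity $(\cdot)^+$ and the outer exponent $1/\alpha$ in~\eqref{post_processing}, because $\tilde{g}$ is by definition $\hat{g}$ evaluated at $\xi=0$, so no expansion of the nonlinearity is required. The first bracket isolates the perturbation due to the aggregation noise $\xi$, and the second isolates the error from approximating the pooling operation.

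For Max-Pooling I would apply the elementary inequality $(a+b)^2\le 2a^2+2b^2$ (which is just $(a-b)^2\ge 0$ rearranged) with $a=\hat{g}-\tilde{g}$ and $b=\tilde{g}-g$, obtaining
\begin{equation}
  (\hat{g}-g)^2 \le 2\left(\hat{g}-\tilde{g}\right)^2 + 2\left(\tilde{g}-g\right)^2 .
\end{equation}
Taking expectations over the channel noise and the data, and matching the two terms to $D_{\sf chan}(\alpha)=\mathbb{E}[|\hat{g}-\tilde{g}|^2]$ and $D_{\sf appr}(\alpha)=\mathbb{E}[|\tilde{g}-g|^2]$, gives $D(\alpha)\le 2[D_{\sf chan}(\alpha)+D_{\sf appr}(\alpha)]$, i.e.\ $c_0=2$.

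For Average-Pooling I would instead invoke Theorem~\ref{theorem: approximation_capability} with $\alpha=1$ and $\beta=K$, under which $\tilde{g}=g_{\sf avg}=g$ holds exactly. Hence the approximation term vanishes, $D_{\sf appr}(\alpha)=0$, and the decomposition collapses to $\hat{g}-g=\hat{g}-\tilde{g}$, so that $D(\alpha)=\mathbb{E}[(\hat{g}-\tilde{g})^2]=D_{\sf chan}(\alpha)$. This is precisely $D(\alpha)=D_{\sf chan}(\alpha)+D_{\sf appr}(\alpha)$, establishing $c_0=1$ with equality; note that this conclusion relies only on $\tilde{g}=g$ and is therefore unaffected by whether the ramp $(\cdot)^+$ is active.

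Since the argument reduces to a one-line application of a convexity-type inequality together with the exact identity $\tilde{g}=g$ in the average case, I do not anticipate a genuine obstacle. The only point demanding care is conceptual rather than technical: one must keep $\tilde{g}$ defined as the $\xi=0$ specialization of the full post-processing map so that the additive split is exact and the two expectation terms coincide with the stated $D_{\sf chan}(\alpha)$ and $D_{\sf appr}(\alpha)$; any alternative linearization would introduce cross terms and spoil the clean separation that produces the constants $c_0=1$ and $c_0=2$.
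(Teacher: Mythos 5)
Your treatment of the Max-Pooling case is exactly the paper's argument: insert $\tilde{g}$, expand the square, and absorb the cross term via $2ab\le a^2+b^2$ to get $c_0=2$. No issues there.

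The Average-Pooling case is where you diverge from the paper, and your version proves less than the lemma asserts. The lemma claims $D(\alpha)\le D_{\sf chan}(\alpha)+D_{\sf appr}(\alpha)$ as a bound valid for general $\alpha\ge 1$; this generality is actually used later, since Problem~(P2) for Average-AirPooling minimizes $\delta+\epsilon_{\sf a}$ over all $\alpha\ge 1$ with $\epsilon_{\sf a}=\mathbb{E}\bigl[(\frac{1}{K}\Vert\bff\Vert_\alpha-g_{\mathsf{avg}})^2\bigr]$ in~\eqref{eqn: define_epsilon} strictly positive for $\alpha>1$, and that surrogate is only justified if the $c_0=1$ bound holds across the whole range of $\alpha$. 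By specializing to $\alpha=1$, $\beta=K$, where $\tilde{g}=g_{\mathsf{avg}}$ exactly, you establish $D=D_{\sf chan}$ at a single point of the domain rather than the claimed inequality as a function of $\alpha$. The paper instead keeps both terms and kills the cross term by a different mechanism: it argues that the noise-induced perturbation $\hat{g}-\tilde{g}$ and the data-dependent deviation $\tilde{g}-g$ are independent with the former zero-mean, so $\mathbb{E}[(\hat{g}-\tilde{g})(\tilde{g}-g)]=0$ and the expansion yields $D=D_{\sf chan}+D_{\sf appr}$ with no doubling. (That step is itself cleanest when $\hat{g}-\tilde{g}$ reduces to a scaled copy of $\xi$, i.e., when the ramp and the outer $1/\alpha$ power are inactive, but it is the mechanism the $c_0=1$ constant rests on.) To close the gap, either reproduce the zero-cross-term argument for general $\alpha$, or fall back on $c_0=2$ for $\alpha>1$ and observe that the constant does not change the minimizer in~(P2) — but as written your proof does not deliver the lemma as stated.
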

\begin{proof}
See Appendix~\ref{app: proof_proposition}.
\end{proof}

The characterization of error components, $D_{\sf chan}(\alpha)$ and $D_{\sf appr}(\alpha)$, can be made simple via the analysis of the mean and variance over pooling i.i.d. random features following the assumption in~\cite{LeCun2010ICML}. This yields the following result.
\begin{lemma}
\label{lemma: mse_max_no_selection}
\emph{The noise-perturbation error and the function-approximation error are upper bounded as
\begin{align}
    \label{eqn: define_delta}
    D_{\sf chan}(\alpha)&\leq \left(\frac{\sigma^2\nu_{\alpha}^2}{P_\mathsf{rx}}\right)^{\frac{1}{\alpha}} \triangleq \delta, \\ 
    D_{\sf appr}(\alpha)&\leq \epsilon = \begin{cases}
        \left(1-K^{-\frac{1}{\alpha}}\right)\mathbb{E}[f_{\sf max}^2 \vert K ]\triangleq \epsilon_{\sf m}, & \text{Max-Pooling as the ground truth,} \\
        \mathbb{E}\left[(\frac{1}{K}\Vert\bff\Vert_\alpha-g_\mathsf{avg})^2\right]\triangleq\epsilon_{\sf a}, & \text{Average-Pooling as the ground truth}.
    \end{cases}
    \label{eqn: define_epsilon}
\end{align}
}
\end{lemma}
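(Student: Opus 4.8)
The plan is to treat the two bounds separately, since the noise-perturbation error $D_{\sf chan}(\alpha)$ and the function-approximation error $D_{\sf appr}(\alpha)$ arise from different mechanisms. Throughout, write $S\triangleq\sum_{k=1}^{K}f_k^{\alpha}\ge 0$ so that $\tilde g=\beta^{-1/\alpha}S^{1/\alpha}$ and $\hat g=\beta^{-1/\alpha}[(S+\xi)^+]^{1/\alpha}$, recall that $\alpha\ge 1$, and note that the post-processing parameter satisfies $\beta\ge 1$ in every case of interest (by Lemma~\ref{lemma: post_processing_factor} for Max-Pooling, and trivially for the explicit choices $\beta=K$ and $\beta=K^{\alpha}$ used below).

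For the channel term I would bound $|\hat g-\tilde g|$ pointwise in $(S,\xi)$ using the subadditivity of the concave map $t\mapsto t^{1/\alpha}$, which gives $|a^{1/\alpha}-b^{1/\alpha}|\le|a-b|^{1/\alpha}$ for all $a,b\ge 0$. Applying this with $a=(S+\xi)^+$ and $b=S$ yields $|\hat g-\tilde g|\le\beta^{-1/\alpha}|(S+\xi)^+-S|^{1/\alpha}$. A short case analysis on the sign of $S+\xi$ then shows that the ramp function never increases the perturbation, i.e. $|(S+\xi)^+-S|\le|\xi|$: the difference equals $\xi$ when $S+\xi\ge 0$, and equals $-S$ with $S<|\xi|$ when $S+\xi<0$. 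Hence $|\hat g-\tilde g|\le\beta^{-1/\alpha}|\xi|^{1/\alpha}$ pointwise. Squaring, taking expectations, and invoking Jensen's inequality for the concave map $x\mapsto x^{1/\alpha}$ gives $\mathbb{E}[|\hat g-\tilde g|^2]\le\beta^{-2/\alpha}\mathbb{E}[(\xi^2)^{1/\alpha}]\le\beta^{-2/\alpha}(\mathbb{E}[\xi^2])^{1/\alpha}$. Since $\mathbb{E}[\xi^2]=\sigma^2\nu_\alpha^2/P_{\sf rx}$ and $\beta\ge 1$ forces $\beta^{-2/\alpha}\le 1$, this collapses to the claimed bound $\delta$.

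For the approximation term with Max-Pooling as ground truth ($g=f_{\sf max}$) I would exploit the norm sandwich $f_{\sf max}\le\Vert\bff\Vert_{\alpha}\le K^{1/\alpha}f_{\sf max}$, valid for non-negative features and $\alpha\ge 1$. Evaluating $\tilde g$ at the admissible choice $\beta=K$ gives $\tilde g=K^{-1/\alpha}\Vert\bff\Vert_{\alpha}\in[K^{-1/\alpha}f_{\sf max},\,f_{\sf max}]$, so $0\le f_{\sf max}-\tilde g\le(1-K^{-1/\alpha})f_{\sf max}$; squaring and using $(1-K^{-1/\alpha})^2\le 1-K^{-1/\alpha}$ delivers $\mathbb{E}[(\tilde g-f_{\sf max})^2]\le(1-K^{-1/\alpha})\mathbb{E}[f_{\sf max}^2\mid K]=\epsilon_{\sf m}$. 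Because $\beta^*$ of Lemma~\ref{lemma: post_processing_factor} minimizes the noise-free error $D_{\sf appr}$, the same bound controls $D_{\sf appr}$ evaluated at $\beta^*$. For Average-Pooling as ground truth ($g=g_{\sf avg}$), the choice $\beta=K^{\alpha}$ makes $\tilde g=\frac{1}{K}\Vert\bff\Vert_{\alpha}$, so $D_{\sf appr}$ equals $\epsilon_{\sf a}$ directly and vanishes at $\alpha=1$, consistent with $\beta^*=K$.

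I expect the channel term to be the main obstacle: it requires handling the composition of the ramp nonlinearity $(\cdot)^+$, the fractional power $(\cdot)^{1/\alpha}$, and a Gaussian moment simultaneously. The concavity (subadditivity) inequality and the ramp case-analysis must be combined precisely so as to strip all dependence on $S$ and leave a pure $|\xi|^{1/\alpha}$ factor, after which Jensen's inequality tames the non-integer moment $\mathbb{E}[|\xi|^{2/\alpha}]$. The approximation term is comparatively routine once the norm sandwich is in hand; the only subtlety is to justify passing from the convenient $\beta$ used in the sandwich to the optimal $\beta^*$ via the minimality established in Lemma~\ref{lemma: post_processing_factor}.
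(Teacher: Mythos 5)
Your proof is correct, and for the noise-perturbation term it follows the paper's own route essentially verbatim: the paper likewise reduces $|\hat g-\tilde g|$ to a multiple of $|\xi|^{1/\alpha}$ via the inequality $\bigl\{[(a+b)^+]^{1/\alpha}-a^{1/\alpha}\bigr\}^2\le|b|^{2/\alpha}$ (stated there with ``details omitted for brevity''), discards the $\beta$ factor using $\beta\ge 1$, and finishes with Jensen's inequality for $x\mapsto x^{1/\alpha}$; your explicit subadditivity-plus-ramp case analysis simply supplies the omitted details. Where you genuinely diverge is the Max-Pooling approximation term. The paper substitutes the optimal $u^*=\mathbb{E}[\Vert\bff\Vert_\alpha f_{\max}]/\mathbb{E}[\Vert\bff\Vert_\alpha^2]$ from Lemma~\ref{lemma: post_processing_factor} into the quadratic to obtain the exact minimum $\mathbb{E}[f_{\max}^2]-\mathbb{E}[\Vert\bff\Vert_\alpha f_{\max}]^2/\mathbb{E}[\Vert\bff\Vert_\alpha^2]$, and then bounds this ratio of expectations using the sandwich $f_{\max}\le\Vert\bff\Vert_\alpha\le K^{1/\alpha}f_{\max}$ applied inside the expectations. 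You instead evaluate the error at the suboptimal but explicit choice $\beta=K$, bound it pointwise by $(1-K^{-1/\alpha})^2 f_{\max}^2\le(1-K^{-1/\alpha})f_{\max}^2$, and transfer the bound to $\beta^*$ via the minimality established in Lemma~\ref{lemma: post_processing_factor}. Both arguments rest on the same norm sandwich and yield the same $\epsilon_{\sf m}$; yours is more elementary (no manipulation of ratios of expectations) and makes the role of optimality explicit, while the paper's variant additionally exhibits the exact value of the minimized noise-free error before bounding it. Your handling of the Average-Pooling case (choosing $\beta=K^{\alpha}$ so that $\tilde g=\frac{1}{K}\Vert\bff\Vert_\alpha$ and $D_{\sf appr}$ equals $\epsilon_{\sf a}$ identically) spells out what the paper dismisses as trivial, and is consistent with it.
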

\begin{proof}
See Appendix~\ref{app: proof_lemma_mse_max_noselection}. 
\end{proof}
These results help the understanding of some useful tradeoffs in AirPooling discussed in the sequel.

\subsection{Tradeoffs in AirPooling}
\label{subsec: tradeoffs-in-airpooling}
First, consider Max-AirPooling. There exists a tradeoff between functional approximation and channel-noise suppression. On one hand, it can be observed from~\eqref{eqn: define_epsilon} that the function-approximation error, $\epsilon_\mathsf{m}$ monotonically decreases as the configuration parameter $\alpha$ grows. Particularly, $\lim_{\alpha \rightarrow \infty} \epsilon_m=0$ and correspondingly Max-AirPooling is asymptotically achieved as shown in Theorem~\ref{theorem: approximation_capability}. On the other hand, a larger $\alpha$ tends to amplify channel noise as explained below. Considering the pre-processing function (see Section~\ref{subsec: generalized_airpooling}), increasing $\alpha$ makes the transmitted features, $\{f_k^\alpha\}$, highly skewed. Specifically, those features with small magnitudes are suppressed and their submission is prone to channel distortion. As a result, the noise-perturbation error bound in \eqref{eqn: define_delta}, $\delta=\left(\frac{\sigma^2\nu_{\alpha}^2}{P_\mathsf{rx}}\right)^{\frac{1}{\alpha}}$, tends to grow as $\alpha$ increases. For example, this term is shown to increase asymptotically linearly with $\alpha$ in the case of rectified Gaussian features analyzed in Section~\ref{subsec: optimize_alpha}.  The above tradeoff between \emph{function approximation} and \emph{channel noise mitigation} is demonstrated numerically in Fig.~\ref{fig: tradeoff_1}. This tradeoff necessitates the optimization of configuration parameter $\alpha$, which is addressed in the next section. 
\begin{figure*}[t]
\centering
\subfigure[Uniform distribution]{\includegraphics[height=4.1cm]{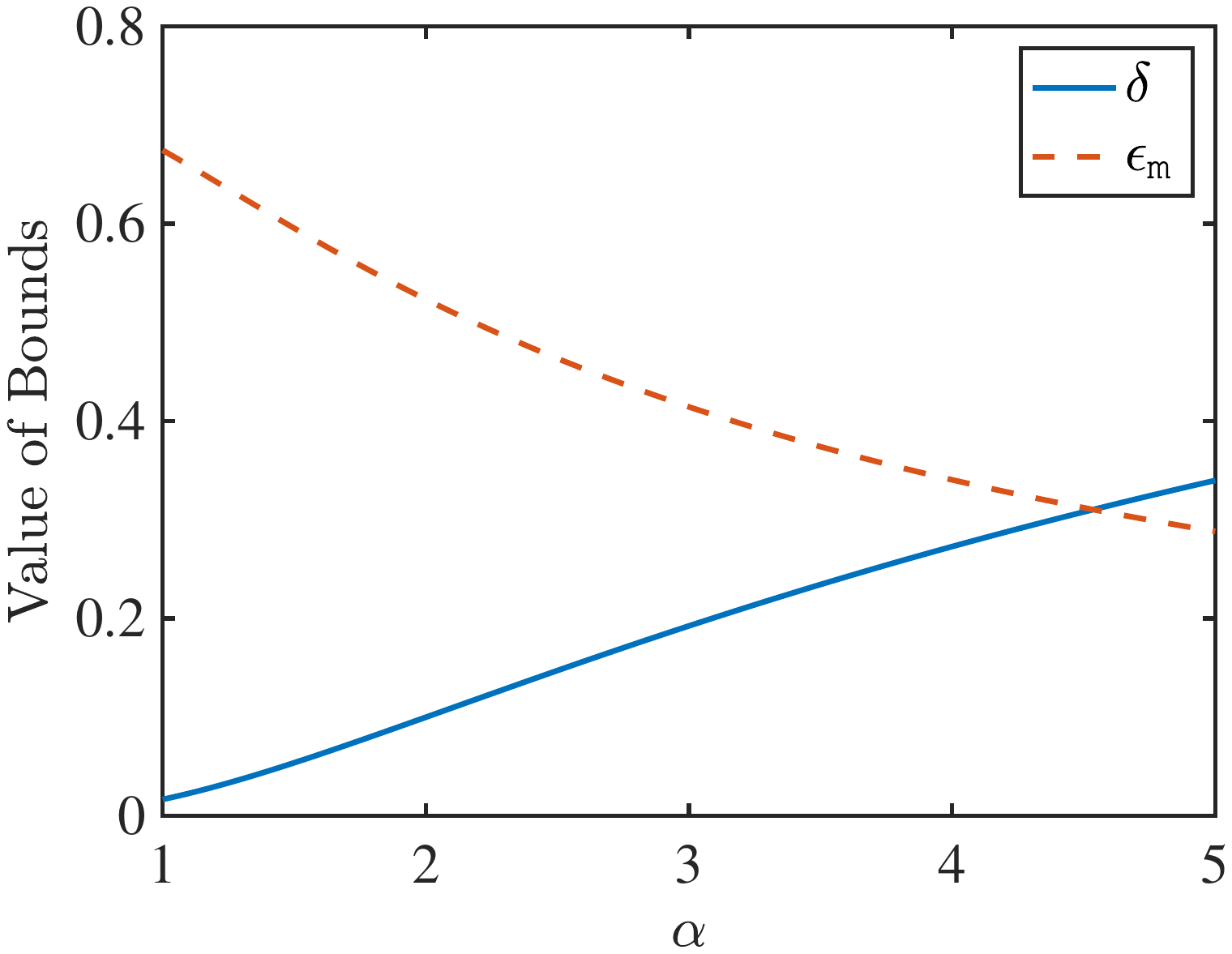}}
\hspace{0.05cm}
\subfigure[Rectified Gaussian distribution]{\includegraphics[height=4.1cm]{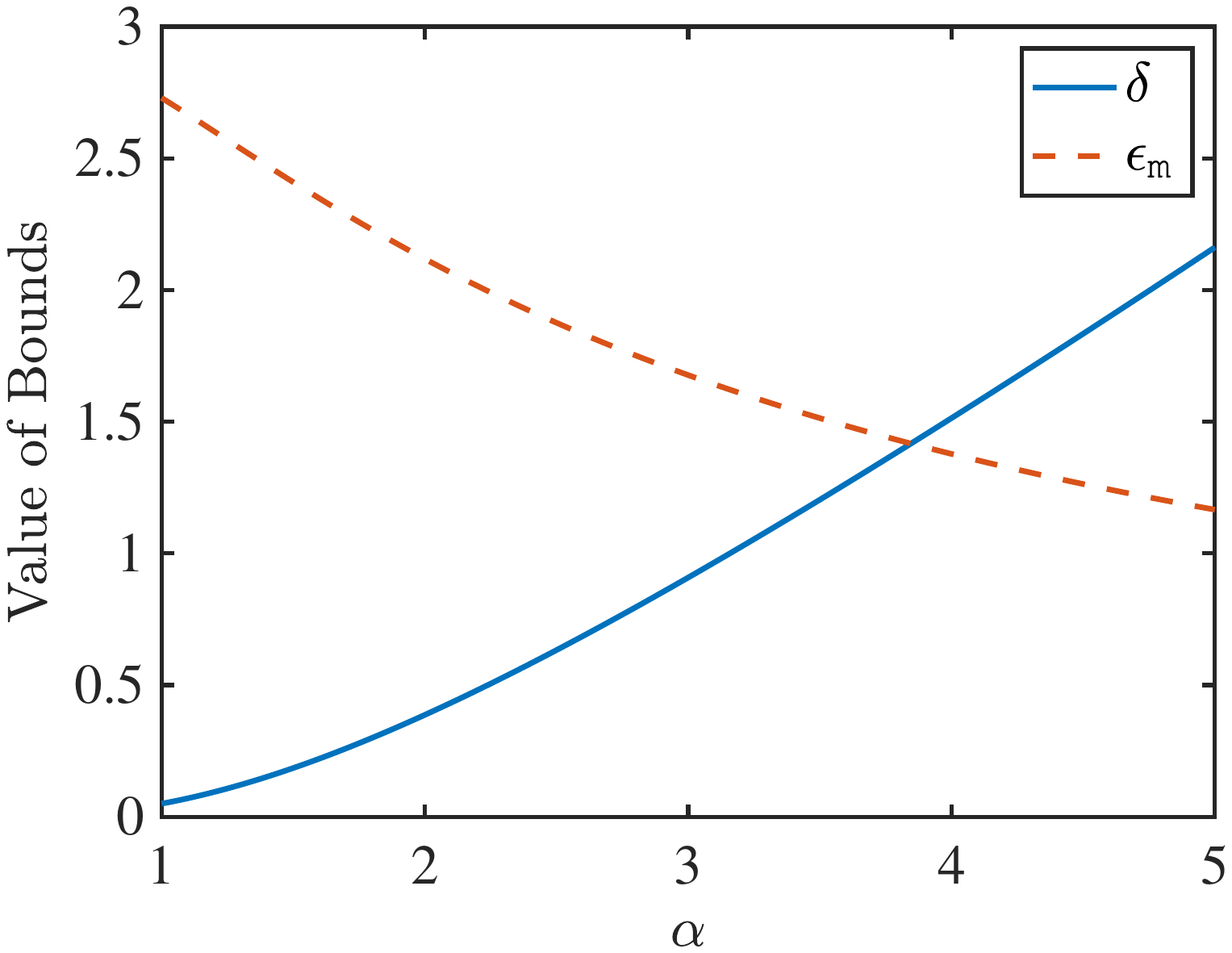}}
\hspace{0.05cm}
\subfigure[Exponential distribution]{\includegraphics[height=4.1cm]{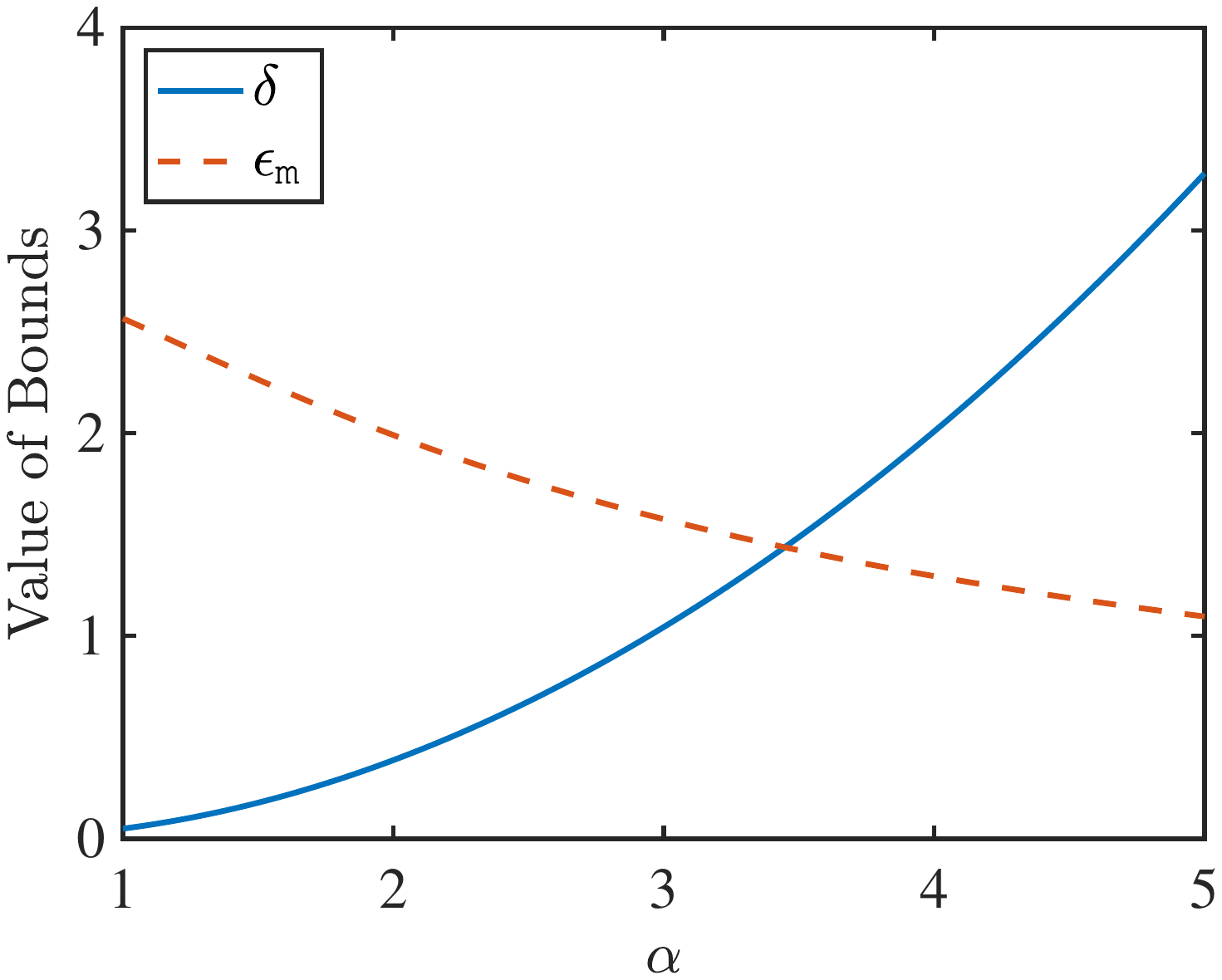}}
\caption{Numerical evaluations of the tradeoff between the noise-perturbation error ${\delta}$ and the function-approximation error ${\epsilon}_{\sf m}$ as controlled by $\alpha$ over types of three feature distributions.}
\label{fig: tradeoff_1}
\end{figure*}

Next, for Average-AirPooling, the simple setting of $\alpha=1$ nulls the function-approximation error. The design challenge centers on quantifying the behavior of the noise-perturbation error, $\delta$, against the configuration parameter $\alpha$ so as to reduce AirPooling error. More details are given in the sequel.

\section{Optimization of AirPooling}
\label{sec: optimization}
In the preceding section, we established the AirPooling error as a surrogate of classification accuracy and derived the inherent tradeoff in AirPooling. In this section, we use the results to minimize the AirPooling error by controlling the configuration parameter as well as coordinating the receive signal power. 

\subsection{Problem Formulation}
\label{subsec: pronblem_formulation}
For either Max- or Average-AirPooling, the AirPooling-error minimization problem is formulated as
\begin{equation*}\text{(P1)}\quad 
\begin{aligned}
\min\limits_{\alpha, P_{\sf rx}}\quad &\, {\delta}+{\epsilon} \\
\mathrm{s. t.}\quad &  \alpha \geq 1,\\
                    &  P_{\sf rx} \leq \bar{P}, 
\end{aligned}
\end{equation*}
where the AirPooling error components are defined in Lemma~\ref{lemma: mse_max_no_selection}. It can be observed from~\eqref{eqn: define_delta} and~\eqref{eqn: define_epsilon} that ${\delta}$ is monotonically decreasing with $P_\mathsf{rx}$ while $\epsilon$ is independent of $P_\mathsf{rx}$. Hence, the optimal value of $P_\mathsf{rx}$ is given as $P^*_\mathsf{rx} = \bar{P}$. As a result, ${\delta}=\left(\frac{\sigma^2}{\bar{P}}\nu^2_{\alpha}\right)^{\frac{1}{\alpha}}$ and Problem (P1) reduces to
\begin{equation*}\text{(P2)}\quad 
\begin{aligned}
\min\limits_{\alpha}\quad &\, \left(\frac{\sigma^2}{\bar{P}}\nu^2_{\alpha}\right)^{\frac{1}{\alpha}}+{\epsilon} \\
\mathrm{s. t.}\quad &  \alpha \geq 1. 
\end{aligned}
\end{equation*}

\subsection{Optimization of Max-AirPooling}
\label{subsec: optimize_alpha}
To solve Problem (P2), we adopt an assumption on the feature distribution following the learning literature (see, e.g., \cite{Socci1997NeurIPS,Sun2021NeurIPS}), that results from Gaussian distributed raw features  traversing through a ReLU activation unit.
\begin{assumption}[\emph{Rectified Gaussian Distribution}]
\label{assume: rectified_gaussian}
\emph{The features are  given by $f_k = \max\{\tilde{f}_k,0\},$}
\emph{where the raw features $\{\tilde{f}_k\}$ are i.i.d. Gaussian distributed with $\tilde{f}_k \sim \mathcal{N}(0,1)$. Mathematically, 
\begin{equation}
    \label{eqn: rectified_gaussian_pdf}
    \Pr\left(f_k=x\right) = 
    \begin{cases}
        \frac{1}{2}, & x=0, \\
        \frac{1}{\sqrt{2\pi}}e^{-\frac{x^2}{2}}, & x > 0.
    \end{cases}
\end{equation}}
\end{assumption}
\noindent Using Assumption~1, we can approximately solve Problem (P2) in closed form though its exact solution is difficult to find. The key is to use the fact that $\alpha$ tends to be large for Max-AirPooling (see Theorem~\ref{theorem: approximation_capability}) to make several approximations.

First, the noise-perturbation error, $\delta$, in the objective of Problem (P2) can be approximated by a simpler expression. To this end, the variance of features raised to the power of $\alpha$, 
$\nu_{\alpha}^2$, can be written as
\begin{align}
    \nu^2_{\alpha}&=\mathbb{E}[(|f_k|^{\alpha}-\mathbb{E}[|f_k|^{\alpha}])^2] = \mathbb{E}[|f_k|^{2\alpha}]-\mathbb{E}^2[|f_k|^{\alpha}].
\end{align}
Based on the distribution in~\eqref{eqn: rectified_gaussian_pdf}, 
\begin{eqnarray}
    \mathbb{E}\left[|f_k|^{s} \right] 
    = \int_{0}^{\infty} x^{s} \Pr(f_k=x) \ \mathrm{d} x = \frac{1}{\sqrt{\pi}} 2^{\frac{s}{2}-1}\Gamma\left(\frac{s+1}{2}\right). \label{eqn: noselection_moments}
\end{eqnarray}
Substituting~\eqref{eqn: noselection_moments} into~\eqref{eqn: define_delta} yields
\begin{equation}
    {\delta}
    \label{eqn: delta_m_gamma}
    = \left\{\frac{\sigma^2}{\bar{P}\sqrt{\pi}}2^{\alpha-1}\left[\Gamma\left(\frac{2\alpha+1}{2}\right)-\frac{\Gamma^2\left(\frac{\alpha+1}{2}\right)}{2\sqrt{\pi}}\right]\right\}^{1/\alpha} 
\end{equation}
Furthermore, the asymptotic growth rate of ${\delta}$ is characterized by the following lemma.
\begin{lemma}\label{lemma: delta_m_rate}
    \emph{The asymptotic growth rate of ${\delta}$ is given by}
    \begin{eqnarray}
        {\delta}
        \label{eqn: delta_m_gamma_1}
        &=& 2\left[\frac{\sigma^2}{2\bar{P}\sqrt{\pi}}\Gamma\left(\frac{2\alpha+1}{2}\right)\right]^{1/\alpha} + O \left(\frac{1}{2^\alpha}\right)\\
        &=& 
        2e^{-1}\left(\frac{\sigma^2}{\sqrt{2}\bar{P}}\right)^{\frac{1}{\alpha}}\alpha 
        + O \left(\frac{1}{\alpha}\right),
        \label{eqn: delta_m_gamma_2}
    \end{eqnarray}
\end{lemma}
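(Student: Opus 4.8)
The plan is to derive \eqref{eqn: delta_m_gamma_1} and then \eqref{eqn: delta_m_gamma_2} in two stages, starting from the exact expression \eqref{eqn: delta_m_gamma}. In the first stage I would show that the subtracted term $\frac{\Gamma^2((\alpha+1)/2)}{2\sqrt{\pi}}$ inside the brackets of \eqref{eqn: delta_m_gamma} is exponentially smaller than $\Gamma(\frac{2\alpha+1}{2})=\Gamma(\alpha+\frac12)$, so that it can be absorbed into an $O(2^{-\alpha})$ term, giving \eqref{eqn: delta_m_gamma_1}. In the second stage I would apply Stirling's formula to the surviving factor $[\Gamma(\alpha+\frac12)]^{1/\alpha}$ to extract its linear-in-$\alpha$ growth, yielding \eqref{eqn: delta_m_gamma_2}.

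For stage one, I would factor the bracket as $\Gamma(\alpha+\frac12)\,[1-r(\alpha)]$ with $r(\alpha)\triangleq\frac{\Gamma^2((\alpha+1)/2)}{2\sqrt{\pi}\,\Gamma(\alpha+\frac12)}$, which is exactly the moment ratio $\mathbb{E}^2[|f_k|^{\alpha}]/\mathbb{E}[|f_k|^{2\alpha}]$ from \eqref{eqn: noselection_moments}. Using Stirling's asymptotics one obtains $\Gamma^2((\alpha+1)/2)/\Gamma(\alpha+\frac12)\sim\sqrt{2\pi}\,2^{-\alpha}$, hence $r(\alpha)=O(2^{-\alpha})$. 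Substituting into \eqref{eqn: delta_m_gamma} and pulling the factor $2$ out of $2^{\alpha}$ under the $1/\alpha$ power gives $\delta=2\big[\frac{\sigma^2}{2\bar{P}\sqrt{\pi}}\Gamma(\alpha+\frac12)\big]^{1/\alpha}\,[1-r(\alpha)]^{1/\alpha}$. The subtlety here is that the leading factor is $\Theta(\alpha)$ while $[1-r(\alpha)]^{1/\alpha}=1-O(2^{-\alpha}/\alpha)$; multiplying the two shows the correction to the leading term is $\Theta(\alpha)\cdot O(2^{-\alpha}/\alpha)=O(2^{-\alpha})$, which establishes \eqref{eqn: delta_m_gamma_1}.

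For stage two, I would take logarithms and invoke Stirling, $\ln\Gamma(\alpha+\frac12)=\alpha\ln(\alpha+\frac12)-\alpha-\frac12+\frac12\ln(2\pi)+O(1/\alpha)$, so that after expanding $\ln(\alpha+\frac12)=\ln\alpha+\frac{1}{2\alpha}+O(1/\alpha^2)$ one gets $\frac1\alpha\ln\Gamma(\alpha+\frac12)=\ln\alpha-1+\frac{\ln(2\pi)}{2\alpha}+O(1/\alpha^2)$. Exponentiating yields $[\Gamma(\alpha+\frac12)]^{1/\alpha}=\alpha e^{-1}\big(1+\frac{\ln(2\pi)}{2\alpha}+O(1/\alpha^2)\big)$. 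I would then rewrite the constant as $\frac{\sigma^2}{2\bar{P}\sqrt{\pi}}=\frac{\sigma^2}{\sqrt{2}\bar{P}}(2\pi)^{-1/2}$, whose $1/\alpha$-power contributes a factor $(2\pi)^{-1/(2\alpha)}=1-\frac{\ln(2\pi)}{2\alpha}+O(1/\alpha^2)$. The crucial observation is that this factor cancels the $O(1)$ constant generated by the $\frac{\ln(2\pi)}{2\alpha}$ term above: the two contributions $\pm\frac{\ln(2\pi)}{2\alpha}\cdot\alpha e^{-1}$ annihilate, leaving $\delta=2e^{-1}\big(\frac{\sigma^2}{\sqrt{2}\bar{P}}\big)^{1/\alpha}\alpha+O(1/\alpha)$, i.e.\ \eqref{eqn: delta_m_gamma_2}; the stage-one remainder is absorbed since $O(2^{-\alpha})\subseteq O(1/\alpha)$.

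The main obstacle I anticipate is the bookkeeping of error orders through the $1/\alpha$ exponentiation, rather than any single hard estimate. In particular, a naive expansion suggests an $O(1)$ error in \eqref{eqn: delta_m_gamma_2}; obtaining the sharper $O(1/\alpha)$ requires noticing the cancellation between the Stirling correction $\frac{\ln(2\pi)}{2\alpha}$ and the constant absorbed when rewriting $\frac{\sigma^2}{2\bar{P}\sqrt{\pi}}$ as $\frac{\sigma^2}{\sqrt{2}\bar{P}}(2\pi)^{-1/2}$. Care is likewise needed to confirm that the $\Theta(\alpha)$ prefactor does not inflate the exponentially small stage-one remainder beyond $O(2^{-\alpha})$.
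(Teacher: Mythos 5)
Your proposal is correct and follows essentially the same route as the paper's proof: the same factorization of the bracket into $\Gamma(\alpha+\tfrac12)\,[1-g(\alpha)]$ with the cross-term shown to be $O(2^{-\alpha})$ via Stirling (including the same care that the $\Theta(\alpha)$ prefactor only inflates the remainder to $O(2^{-\alpha})$), followed by Stirling's formula applied to $[\Gamma(\alpha+\tfrac12)]^{1/\alpha}$. The only difference is presentational -- you track the second stage in log space and observe the cancellation of the $\frac{\ln(2\pi)}{2\alpha}$ terms, whereas the paper keeps the $\sqrt{2\pi}$ inside the braces so it cancels against the constant exactly before the $1/\alpha$ power is taken, and instead cancels $(\alpha+\tfrac12)$ against $e^{-1/(2\alpha)}$.
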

\begin{proof}
   See Appendix~\ref{proof_lemma_delta_m_asym}. 
\end{proof}
It follows that for large $\alpha$, $\delta \approx \hat{\delta}$ with $\hat{\delta}$ defined as
\begin{equation}
    \hat{\delta} =  2e^{-1}\left(\frac{\sigma^2}{\sqrt{2}\bar{P}}\right)^{\frac{1}{\alpha}}\alpha . \nonumber
\end{equation}
The approximation is increasingly tight as $\alpha$ grows. Accordingly, Problem (P2) can be approximated as 
\begin{equation*}\text{(P3)}\quad 
\begin{aligned}
\min\limits_{\alpha}\quad &\, 2e^{-1}\left(\frac{\sigma^2}{\sqrt{2}\bar{P}}\right)^{\frac{1}{\alpha}}\alpha+{\epsilon}_{\sf m} \\
\mathrm{s. t.}\quad &  \alpha \geq 1. 
\end{aligned}
\end{equation*}

\begin{figure*}[t]
\centering
\subfigure[]{\includegraphics[height=5.9cm]{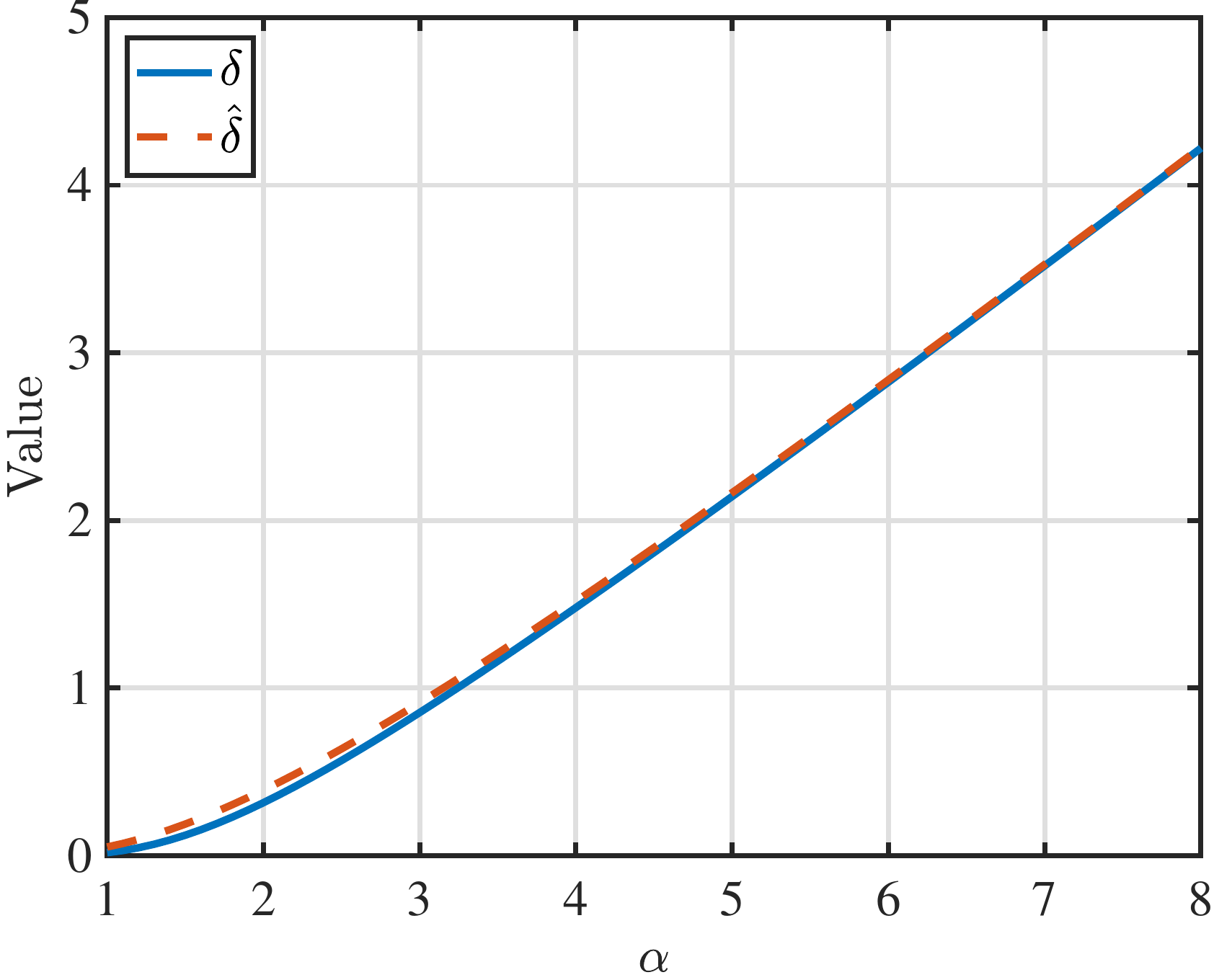}}
\hspace{0.5cm}
\subfigure[]{\includegraphics[height=5.9cm]{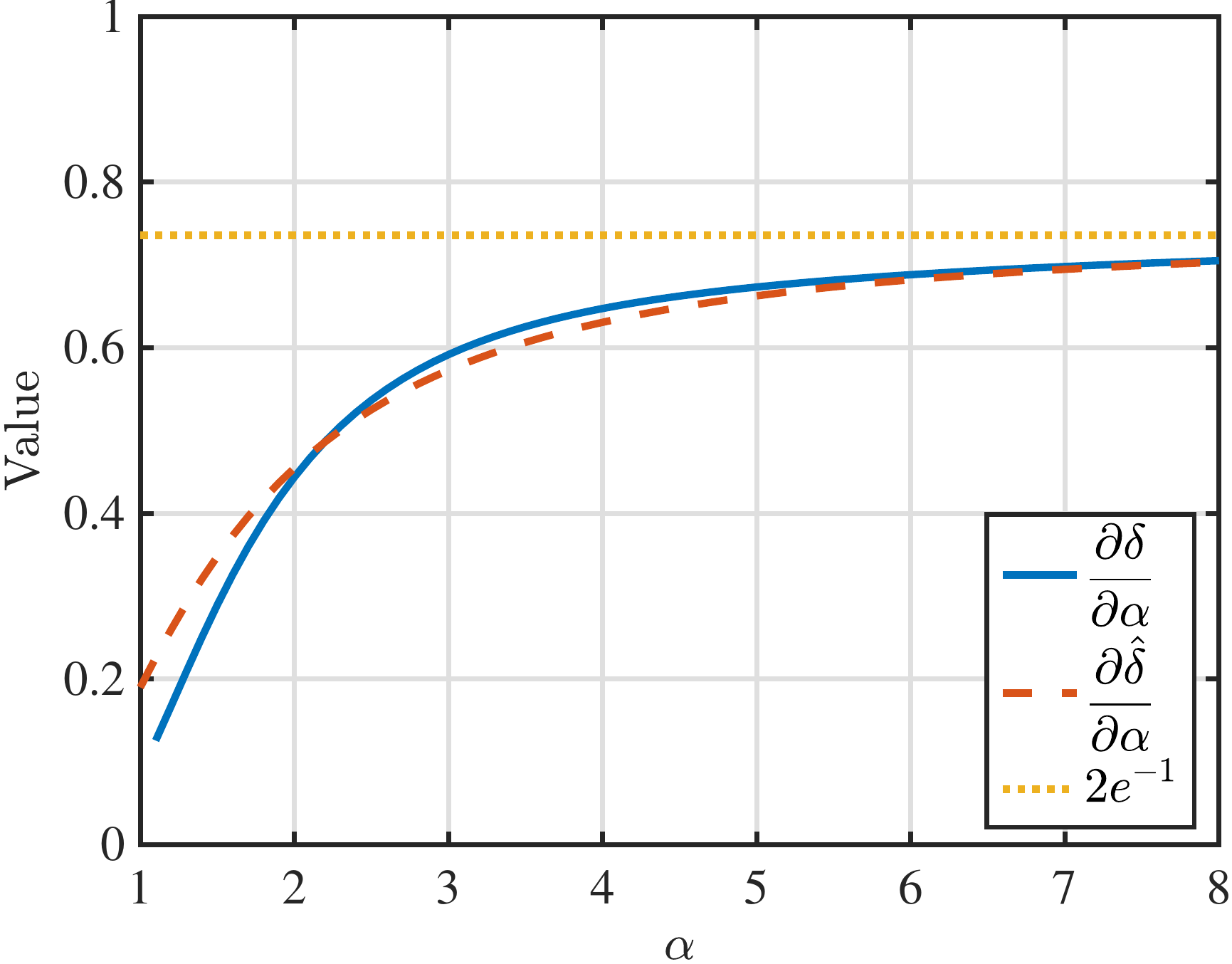}}
\caption{Numerical evaluation of (a) ${\delta}$, $\hat{\delta}$ and (b) their derivatives.}
\label{fig: epsilon1_results}
\end{figure*}

The key to solving Problem (P3) is to show that the two terms in its objective are monotone increasing and decreasing functions, respectively. These are the direct consequences of the tradeoff between functional approximation and noise mitigation as discussed in Section~\ref{subsec: tradeoffs-in-airpooling}. The desired results are obtained as shown in the following two lemmas.
\begin{lemma}
\label{lemma: asymp_gamma_difference}
\emph{ Given $\frac{\bar{P}}{\sigma^2}\geq 1$, the noise-perturbation error, {$\hat{\delta}(\alpha)$}, is a monotone increasing function since
\begin{equation}
    \frac{\partial \hat{\delta}}{\partial \alpha}=2e^{-1}\left(\frac{\sigma^2}{\sqrt{2}\bar{P}}\right)^{\frac{1}{\alpha}}\left[1+\frac{1}{\alpha}\log\left(\frac{\sqrt{2}\bar{P}}{\sigma^2}\right)\right] \geq 0.
\end{equation}
}
\end{lemma}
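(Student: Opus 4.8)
The plan is to directly verify the displayed derivative and then read off its sign from the hypothesis $\bar{P}/\sigma^2 \geq 1$. Recall from the discussion preceding the lemma that $\hat{\delta}(\alpha) = 2e^{-1}\left(\frac{\sigma^2}{\sqrt{2}\bar{P}}\right)^{1/\alpha}\alpha$, a product of a power term and the linear factor $\alpha$. I would differentiate by the product rule, which reduces the task to computing the derivative of the power term $\left(\frac{\sigma^2}{\sqrt{2}\bar{P}}\right)^{1/\alpha}$ with respect to $\alpha$.

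First I would write the power term in exponential form. Setting $c \triangleq \frac{\sigma^2}{\sqrt{2}\bar{P}}$ so that $c^{1/\alpha} = \exp\!\left(\frac{\log c}{\alpha}\right)$, the chain rule gives $\frac{\partial}{\partial\alpha} c^{1/\alpha} = -\frac{\log c}{\alpha^2}\,c^{1/\alpha}$. Feeding this into the product rule yields $\frac{\partial}{\partial\alpha}\left(\alpha\, c^{1/\alpha}\right) = c^{1/\alpha}\left(1 - \frac{\log c}{\alpha}\right)$, and multiplying by $2e^{-1}$ gives the derivative of $\hat{\delta}$.

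Next I would rewrite the bracket in the stated form. Since $\log c = \log\frac{\sigma^2}{\sqrt{2}\bar{P}} = -\log\frac{\sqrt{2}\bar{P}}{\sigma^2}$, the factor $1 - \frac{\log c}{\alpha}$ becomes $1 + \frac{1}{\alpha}\log\frac{\sqrt{2}\bar{P}}{\sigma^2}$, which recovers exactly the expression claimed in the lemma.

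Finally, for the sign: the prefactor $2e^{-1}c^{1/\alpha}$ is strictly positive, so the sign of $\frac{\partial\hat{\delta}}{\partial\alpha}$ is governed by the bracket. Under $\bar{P}/\sigma^2 \geq 1$ we have $\frac{\sqrt{2}\bar{P}}{\sigma^2} \geq \sqrt{2} > 1$, so the logarithm is positive; combined with $\alpha \geq 1 > 0$ this makes the bracket strictly larger than $1$, hence the derivative is positive and $\hat{\delta}$ is monotone increasing. There is no real obstacle here — the computation is an elementary differentiation, and the only place to take care is the sign bookkeeping of $\log c$ when re-expressing it through $\frac{\sqrt{2}\bar{P}}{\sigma^2}$. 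I would additionally note that the weaker condition $\bar{P}/\sigma^2 \geq 1/\sqrt{2}$ already guarantees a nonnegative bracket, so the stated hypothesis is comfortably sufficient.
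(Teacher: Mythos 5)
Your computation is correct and matches what the paper does implicitly: the lemma carries its own proof in the word ``since,'' with the derivative formula stated and the sign read off from $\bar{P}/\sigma^2\geq 1$, and your product-rule/chain-rule derivation of $\frac{\partial}{\partial\alpha}\bigl(\alpha\,c^{1/\alpha}\bigr)=c^{1/\alpha}\bigl(1-\frac{\log c}{\alpha}\bigr)$ with $c=\sigma^2/(\sqrt{2}\bar{P})$ reproduces exactly the displayed expression. Your side remark that $\bar{P}/\sigma^2\geq 1/\sqrt{2}$ already suffices for a nonnegative bracket is also accurate, so nothing is missing.
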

Note that for large $\alpha$, both ${\frac{\partial \delta}{\delta \alpha}}$ and ${\frac{\partial \hat{\delta}}{\delta \alpha}}$ are close to $2e^{-1}$. To illustrate $\hat{\delta}$ being an accurate approximation of $\delta_{\sf m}$ for large $\alpha$, 
we plot ${\delta}$, $\hat{\delta}$ and their derivatives against $\alpha$ in Fig.~\ref{fig: epsilon1_results}, where $\frac{\bar P}{\sigma^2}$ is set to be ${10}$. 
\begin{lemma}
    \label{lemma: monotone-epsilon}
    \emph{The functional approximation error, $\epsilon_{\sf m}(\alpha)$, is a monotone decreasing function.}
\end{lemma}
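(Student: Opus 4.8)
The plan is to exploit the closed form of the functional-approximation error furnished in Lemma~\ref{lemma: mse_max_no_selection}, namely $\epsilon_{\sf m}(\alpha)=\left(1-K^{-1/\alpha}\right)\mathbb{E}[f_{\sf max}^2\mid K]$ as given in~\eqref{eqn: define_epsilon}. The crucial observation is that the factor $\mathbb{E}[f_{\sf max}^2\mid K]$ is a non-negative constant that does \emph{not} depend on the configuration parameter $\alpha$: it is a property of the underlying feature distribution and of the number of sensors $K$, not of the pre-/post-processing. Consequently the entire $\alpha$-dependence of $\epsilon_{\sf m}$ is carried by the scalar prefactor $1-K^{-1/\alpha}$, and proving the claim reduces to establishing the monotonicity of $1-K^{-1/\alpha}$ in $\alpha$ on the feasible range $\alpha\geq 1$.

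Next I would rewrite the prefactor as $K^{-1/\alpha}=\exp\!\left(-\tfrac{\ln K}{\alpha}\right)$ and analyze it directly. For $K\geq 2$ we have $\ln K>0$, so the exponent $-\tfrac{\ln K}{\alpha}$ is strictly increasing in $\alpha$; since $\exp(\cdot)$ is increasing, $K^{-1/\alpha}$ is strictly increasing and hence $1-K^{-1/\alpha}$ is strictly decreasing, approaching $0$ as $\alpha\to\infty$. Equivalently, differentiating yields
\begin{equation}
    \frac{\partial \epsilon_{\sf m}}{\partial \alpha}
    = -\,\mathbb{E}[f_{\sf max}^2\mid K]\,\exp\!\left(-\tfrac{\ln K}{\alpha}\right)\frac{\ln K}{\alpha^2}\leq 0,
\end{equation}
with strict inequality whenever $K\geq 2$ and the features are not identically zero. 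This establishes that $\epsilon_{\sf m}(\alpha)$ is monotone decreasing, which is consistent with the limit $\lim_{\alpha\to\infty}\epsilon_{\sf m}=0$ noted in Section~\ref{subsec: tradeoffs-in-airpooling} and with the exact recovery of Max-Pooling in Theorem~\ref{theorem: approximation_capability}.

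There is essentially no hard step in this argument: once the $\alpha$-independence of $\mathbb{E}[f_{\sf max}^2\mid K]$ is recorded, the statement follows from a one-line calculus computation. The only point deserving care is precisely that remark about $\alpha$-independence, since a careless reading of~\eqref{eqn: define_epsilon} might suggest the expectation also varies with the pre-processing exponent. The degenerate case $K=1$ gives $\epsilon_{\sf m}\equiv 0$, which is (weakly) monotone and thus covered, so the conclusion holds for all $K\geq 1$.
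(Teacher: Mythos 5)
Your proof is correct and matches the paper's own argument: both reduce the claim to the sign of $\frac{\partial \epsilon_{\sf m}}{\partial \alpha} = -\frac{1}{\alpha^2}K^{-1/\alpha}\,\mathbb{E}[f_{\max}^2\mid K]\log K \leq 0$. Your added remarks on the $\alpha$-independence of $\mathbb{E}[f_{\max}^2\mid K]$ and the edge case $K=1$ are fine but do not change the substance.
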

\begin{proof}
    The partial derivative of $\epsilon_{\sf m}(\alpha)$ is given by
    \begin{eqnarray}
    \label{epsilon_2_derivative}
    \frac{\partial {\epsilon}_{\sf m}}{\partial \alpha} = -\frac{1}{\alpha^2}K^{-\frac{1}{\alpha}}  \mathbb{E}[f_{\max}^2\vert K] \log{K} \leq 0. 
\end{eqnarray} 
\end{proof}
It follows from Lemmas~\ref{lemma: asymp_gamma_difference} and~\ref{lemma: monotone-epsilon} that the solution for Problem (P3) satisfies $\frac{\partial \hat{\delta}}{\partial \alpha}+\frac{\partial {\epsilon}_{\sf m}}{\partial \alpha}=0$, as specified by
\begin{equation}
    \label{eqn: noselection_suboptimal_equation}
    2e^{-1}\left(\frac{K\sigma^2}{\sqrt{2}\bar{P}}\right)^{\frac{1}{\alpha}}\left[1+\frac{1}{\alpha}\log\left(\frac{\sqrt{2}\bar{P}}{\sigma^2}\right)\right]=\frac{1}{\alpha^2}\mathbb{E}[f_{\max}^2\vert K]\log{K}. 
\end{equation}
Solving~\eqref{eqn: noselection_suboptimal_equation} yields the following main result of this section.
\begin{theorem}[\emph{Optimized Max-AirPooling}]\label{optimized_alpha_no_selection} \emph{Given $K\geq 4$ and $\bar{P}/\sigma^2>K$, the configuration parameter that approximately satisfies~\eqref{eqn: noselection_suboptimal_equation} and hence solves Problem (P2), denoted as  $\alpha^{*}$, is given as
\begin{equation}\label{eqn: optimized_alpha}
    \alpha^* = \frac{C}{W_0\left(\frac{2C(C+\log K)}{\exp{\left(1+\frac{C}{C+\log K}\right)} \mathbb{E}[f_{\max}^2\vert K]\log{K}}\right)+\frac{C}{C+\log K}},
\end{equation}
where $C \triangleq \log \left(\frac{\sqrt{2}\bar{P}}{K\sigma^2}\right)$ and $W_0(\cdot)$ denotes the principal branch of the Lambert $W$ function. }
\end{theorem}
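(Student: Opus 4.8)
The plan is to solve the stationarity condition \eqref{eqn: noselection_suboptimal_equation} in closed form by reducing it, through a single high-SNR approximation, to the canonical Lambert-$W$ form $we^{w}=A$. Throughout I write $L\triangleq\log K$ and $M\triangleq\mathbb{E}[f_{\max}^2\,|\,K]\log K$, and I use $C=\log(\sqrt{2}\bar{P}/(K\sigma^2))$ together with the identity $\log(\sqrt{2}\bar{P}/\sigma^2)=C+L$. Under the hypothesis $\bar{P}/\sigma^2>K$ we have $C>\tfrac12\log 2>0$, and $C$ grows with the SNR $\bar{P}/\sigma^2$; this positivity is what lets me invoke the principal branch $W_0$ later.

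First I would rewrite \eqref{eqn: noselection_suboptimal_equation}. Multiplying through by $\alpha^{2}e^{C/\alpha}$ and collecting terms turns it into the equivalent form
\[
    e^{C/\alpha}=\frac{2\alpha(\alpha+C+L)}{eM}.
\]
The obstruction to a Lambert-$W$ inversion is the quadratic factor $\alpha(\alpha+C+L)$. The key step is the high-SNR approximation: since for Max-AirPooling $\alpha$ is large while $C+L$ is even larger (because $C$ scales with $\bar{P}/\sigma^2$), the ratio $\alpha/(C+L)$ is small, so writing $\alpha+C+L=\big[(C+L)^2-\alpha^2\big]/(C+L-\alpha)$ I drop the $\alpha^2$ term to obtain $\alpha+C+L\approx (C+L)^2/(C+L-\alpha)$. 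This replaces the equation by
\[
    \Big(\tfrac{1}{\alpha}-\tfrac{1}{C+L}\Big)e^{C/\alpha}=\frac{2(C+L)}{eM}.
\]

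Next I would introduce the single variable $w\triangleq \tfrac{C}{\alpha}-\tfrac{C}{C+L}$, so that $e^{C/\alpha}=e^{w}e^{C/(C+L)}$ and the displayed equation collapses to $we^{w}=A$ with $A=\tfrac{2C(C+L)}{M}\,e^{-(1+C/(C+L))}$, exactly the argument of $W_0$ in the statement. Since $A>0$, the principal branch gives $w=W_0(A)>0$, hence $\tfrac{C}{\alpha}=W_0(A)+\tfrac{C}{C+L}$, which rearranges to the claimed $\alpha^{*}$. Because Lemmas~\ref{lemma: asymp_gamma_difference} and~\ref{lemma: monotone-epsilon} make the two summands of the objective in (P3) monotone in opposite directions, this stationary point is the unique minimizer, so solving \eqref{eqn: noselection_suboptimal_equation} indeed solves (P2).

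The hard part will be the justification, not the algebra. I expect two obstacles. First, controlling the neglected $\alpha^2$ term: I would verify a posteriori that the root $\alpha^{*}$ satisfies $\alpha^{*}\ll C+L$ under $\bar{P}/\sigma^2>K$, so the dropped term is genuinely lower order and the large-$\alpha$ replacement $\delta\approx\hat{\delta}$ from Lemma~\ref{lemma: delta_m_rate} stays consistent. Second, branch selection and admissibility: I must show that $w=W_0(A)>0$ is the correct root (it corresponds to $\alpha<C+L$, ruling out the spurious $W_{-1}$ root) and that the resulting $\alpha^{*}\geq 1$ lies in the feasible set of (P2). This last step is where the quantitative hypotheses $K\geq 4$ and $\bar{P}/\sigma^2>K$ enter, to force $W_0(A)+\tfrac{C}{C+L}\leq C$ and hence $\alpha^{*}\geq 1$.
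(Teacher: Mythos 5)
Your proposal is correct and follows essentially the same route as the paper: your exact rewriting $e^{C/\alpha}=\frac{2\alpha(\alpha+C+\log K)}{e\,\mathbb{E}[f_{\max}^2\vert K]\log K}$, the high-SNR neglect of $\alpha^2$ against $(C+\log K)^2$, and the substitution $w=\frac{C}{\alpha}-\frac{C}{C+\log K}$ reproduce exactly the paper's approximate equation and its variable $v=u-2A$ (the paper instead phrases the same approximation as dropping $A^2/u$ against $2A$ in its $u$-substituted form), leading to the identical Lambert-$W$ inversion. Your closing remarks on branch selection, uniqueness, and the role of $K\geq 4$ and $\bar{P}/\sigma^2>K$ in guaranteeing $\alpha^*\geq 1$ match the paper's treatment.
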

\begin{proof}
    See Appendix~\ref{proof_optimal_alpha}.
\end{proof}
The configuration parameter, $\alpha^{*}$ in~\eqref{eqn: optimized_alpha}, balances Max-Pooling approximation and noise mitigation to maximize the classification accuracy. It can be observed from~\eqref{eqn: optimized_alpha} that $\alpha^{*}$ grows as the transmit power budget $\bar{P}$ increases. 
The reason is that as the noise effect diminishes, the functional approximation error dominates and can be suppressed by increasing $\alpha$ (see Theorem~\ref{theorem: approximation_capability}).

\begin{remark}
(Optimal Max-AirPooling in the Low-SNR Regime)
\emph{While Theorem 2 holds for the moderate- to high-SNR scenarios, the optimal configuration parameter in the low-SNR regime is discussed as follows. Let $\rho_0 \triangleq \frac{\sqrt{2}K}{e\mathbb{E}[f_{\max}^2\vert K]\log K}$. In the regime that the transmit power is no higher than the critical value $\rho_0$, it is easy to show that $\frac{\partial (\tilde{\delta}+{\epsilon}_{\sf m})}{\partial \alpha}>0$ at $\alpha=1$, indicating that suppressing channel noises now outweighs reducing the function-approximation error. In this case, $\alpha^*=1$, corresponding to Average-AirPooling. The reason is that the averaging operation of Average-AirPooling helps to suppress the channel noise such that its power is inversely proportional to the number of sensors, $K$.    }
\end{remark}

\subsection{Optimization of Average-AirPooling}

We are now in the position to optimize the control parameter for Average-AirPooling. Following the same methodology developed for Max-AirPooling, the optimized $\alpha$ can be easily derived for rectified Gaussian features.
\begin{proposition}
(Optimized Average-AirPooling).  \emph{Given $\bar{P}>\sigma^2$ and the feature distribution satisfying $(\nu_\alpha^2)^{\frac{1}{\alpha}}$ monotonically increasing in $\alpha$, the optimal configuration parameter that solves Problem (P2) is given by $\alpha^*=1$. }
\label{proposition: avg-airpooling}
\end{proposition}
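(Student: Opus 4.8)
The plan is to exploit the special structure that Average-Pooling enjoys at $\alpha=1$, namely that the function-approximation error vanishes there, and then to argue that the remaining noise-perturbation term can only grow as $\alpha$ increases. Writing the objective of Problem (P2) for Average-AirPooling as $J(\alpha)=\delta(\alpha)+\epsilon_{\sf a}(\alpha)$, the first step is to pin down the value at the candidate optimum. By Theorem~\ref{theorem: approximation_capability} (with $\beta=K$, $\alpha=1$) we have $\tilde{g}=g_{\sf avg}$, equivalently $\frac{1}{K}\Vert\bff\Vert_1=g_{\sf avg}$, because the ReLU/sigmoid features are non-negative so that $\Vert\bff\Vert_1=\sum_k f_k$. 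Hence $\epsilon_{\sf a}(1)=\mathbb{E}[(\frac1K\Vert\bff\Vert_1-g_{\sf avg})^2]=0$ and $J(1)=\delta(1)$. Since $\epsilon_{\sf a}(\alpha)$ is an expectation of a square it is non-negative for every $\alpha\ge1$, so it suffices to show $\delta(\alpha)\ge\delta(1)$ on $[1,\infty)$; then $J(\alpha)\ge\delta(\alpha)\ge\delta(1)=J(1)$ and the minimum is attained at $\alpha^*=1$.

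The core step is therefore the monotonicity of the noise-perturbation bound $\delta(\alpha)=\big(\frac{\sigma^2}{\bar{P}}\nu_\alpha^2\big)^{1/\alpha}$ from Lemma~\ref{lemma: mse_max_no_selection}. I would factor it as $\delta(\alpha)=\big(\frac{\sigma^2}{\bar{P}}\big)^{1/\alpha}\,(\nu_\alpha^2)^{1/\alpha}$ and treat the two factors separately. The second factor is monotonically increasing in $\alpha$ by the stated hypothesis. For the first factor, the hypothesis $\bar{P}>\sigma^2$ gives $0<\sigma^2/\bar{P}<1$; since $\alpha\mapsto1/\alpha$ is decreasing and raising a base in $(0,1)$ to a smaller exponent yields a larger value, $\big(\frac{\sigma^2}{\bar{P}}\big)^{1/\alpha}$ is increasing in $\alpha$. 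A product of two positive, monotonically increasing functions is itself monotonically increasing, so $\delta(\alpha)$ is non-decreasing on $[1,\infty)$, which delivers $\delta(\alpha)\ge\delta(1)$ and closes the argument.

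I do not expect a serious obstacle here; the proposition is essentially a structural corollary of Lemma~\ref{lemma: mse_max_no_selection} and Theorem~\ref{theorem: approximation_capability}. The one point that requires a little care is the factorization argument: one must verify that the factor $\big(\sigma^2/\bar{P}\big)^{1/\alpha}$ is genuinely increasing, which is exactly where the assumption $\bar{P}>\sigma^2$ is used (had $\bar{P}<\sigma^2$ this factor would decrease and the conclusion could fail), and that ``product of increasing positives is increasing'' is invoked with both factors strictly positive. A secondary, optional step would be to confirm that the rectified-Gaussian model of Assumption~\ref{assume: rectified_gaussian} indeed satisfies the hypothesis that $(\nu_\alpha^2)^{1/\alpha}$ is increasing, so that the proposition applies to the running example; this can be read off the moment formula~\eqref{eqn: noselection_moments} but is not needed for the abstract statement.
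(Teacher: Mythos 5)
Your proof is correct and follows essentially the same route as the paper: note that $\epsilon_{\sf a}$ vanishes at $\alpha=1$ (and is non-negative elsewhere), and that $\delta(\alpha)=\bigl(\tfrac{\sigma^2}{\bar P}\nu_\alpha^2\bigr)^{1/\alpha}$ is monotonically increasing under the stated hypotheses, so $\alpha=1$ simultaneously minimizes both terms. The only difference is that you spell out the factorization $\bigl(\tfrac{\sigma^2}{\bar P}\bigr)^{1/\alpha}(\nu_\alpha^2)^{1/\alpha}$ explicitly, which the paper leaves implicit.
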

\begin{proof}
   When $\alpha=1$, from \eqref{eqn: approximation_capability} we know that $\epsilon_{\mathsf{a}}=\mathbb{E}[(\tilde{g}-g_\mathsf{avg})^2|\alpha=1]=0$. On the other hand, given $P\geq\sigma^2$ and $(\nu_\alpha^2)^{\frac{1}{\alpha}}$ increasing in $\alpha$, $\delta = \left(\frac{\sigma^2\nu_{\alpha}^2}{P_\mathsf{rx}}\right)^{\frac{1}{\alpha}}$ is monotonically increasing with $\alpha$. Hence, $\alpha=1$ simultaneously minimizes $\delta$ and $\epsilon_{\mathsf{a}}$, and is thus optimal.
\end{proof}

\section{Experimental Results}
{In this section, we evaluate the proposed Max- and Average-AirPooling in a practical setting. We first present the experimental setup, and then study and compare the performance in various scenarios. }
\label{sec: experiments}

\subsection{Experimental Setup}

\subsubsection{System and Communication Settings}
Consider a distributed sensing system with $K=12$ cameras.
The total communication bandwidth between the cameras and the edge server  is $B=10$ MHz. It is divided into $M=12$ orthogonal sub-channels ({\emph{orthogonal frequency-division multiple access} (OFDMA)~\cite{Goldsmith2005}}), where each sub-channel is used for AirPooling of one feature dimension. In other words, each sensor transmits one symbol simultaneously with others over one sub-channel and $\frac{M}{B}$ seconds.  The receive noise power density at the server is set to $-174$ dBm with a hardware noise factor (also known as noise figure) $4$ dB. The noise power of each sub-channel is then $10^{-20}\frac{B}{M}$ Watt. The power budget per sub-channel is $\frac{P_0}{M}$, where $P_0$ is the total power budget of each sensor. We assume that the path-loss from devices to the server is all equal to $P_{\sf pl}=300^{-3.4}$ due to similar distances plus channel-inversion power control. Given the above settings, the  \emph{average receive SNR} ($\textrm{SNR}_{\textrm rx}$) of AirPooling equals to
\begin{equation}
    \label{eqn: experiment_rsnr}
    \textrm{SNR}_{\textrm rx} = \frac{P_0P_{\sf pl}}{ 10^{-20}  B  \mathbb{E}\left[\Vert h_k\Vert^{-2}\right] }, 
\end{equation}
where $\mathbb{E}\left[\Vert h_k\Vert^{-2}\right]$ is due to channel inversion. We consider \emph{Rician} fading  where the ratio of \emph{line-of-sight} (LoS) channel gain to non-LoS channel gain is set as $4$ dB. The latency (in seconds) of AirPooling is then given by $L_{\sf AirPooling} = \frac{N}{B}$,
where $N$ is the total number of features per sensor.

\begin{table}[t!]
\centering
\caption{Comparison of Classification Accuracies of Multi-View Max- and Average-Pooling for Distributed Sensing using the MVCNN Model and Real Datasets.}
\begin{tabular}{ccc}
\hline
Dataset & Max-Pooling & Average-Pooling  \\ \hline
ModelNet10       &         $91.67\%$         &     $89.17\%$                 \\
ModelNet40         &         $86.90\%$         &    $85.15\%$                  \\
ModelNet10-Shaded       &        $90.20\%$          &      $89.46\%$                \\
ModelNet40-Shaded       &       $87.33\%$           &      $86.52\%$                \\
ShapeNet       &      $94.18\%$        &      $93.45\%$                \\ \hline
\end{tabular}
\label{table: test_acc_modelnets}
\end{table}

\subsubsection{Model Architecture and Datasets} To implement the distributed sensing model, we adopt the mentioned MVCNN architecture with the 
with the \emph{VGG11} backbone for training and testing~\cite{Hang2015ICCV}. We build two variants of MVCNN: one for Max-Pooling and the other for Average-Pooling. We consider the celebrated \emph{ModelNet} and \emph{ShapeNet} datasets for multi-view object recognition. For each {ModelNet object class} (e.g., sofa), $12$ views per sample are captured by $K=12$ camera sensors with neighboring ones separated by an angle of $30^{\circ}$. There are two subsets of ModelNet, called ModelNet40 and ModelNet10, that comprise $40$ and $10$ most popular classes, respectively. By rendering views in a shaded style, ModelNet40 and ModelNet10 become ModelNet40-Shaded and ModelNet10-Shaded, generating another two subsets of ModelNet. On the other hand, ShapeNet consists of $13$ major classes (e.g., aeroplane and car) with 12 views per sample. Similar to ModelNet, the views are observed from uniformly separated angles. 

\subsubsection{Synthetic Dataset}
We construct a feature dataset for a multi-view binary classification task that is simple and requires only a \emph{shallow neural network} (SNN). Consider $K=4$ sensors. Each sample consists of $K=4$ views, where each view has $N=4$ features drawn from the rectified Gaussian distribution in~\eqref{eqn: rectified_gaussian_pdf}. The label of the sample is assigned according to its projection onto a pre-determined feature space via a non-linear transformation.  The SNN at the server, taking four pooled features as its input, comprises two \emph{fully-connected} (FC) layers with $5$ neurons each and a softmax output layer and is trained to classify the pooled feature vector. 

\subsubsection{Benchmarking Scheme}
We benchmark AirPooling against the traditional broadband digital multi-access scheme. It is termed \emph{digital air interface}. On each device, a single feature is quantized into $Q$ bits by uniform quantization. The server decodes the feature bit streams from sensors and pools them to obtain the pooled features. Varying $Q$ creates a tradeoff between the transmission latency and the fidelity, or inference accuracy.
To avoid multi-user interference, digital air interface also adopts OFDMA but each device uses only $\frac{M}{K}$ sub-channels instead of all $M$ sub-channels as for AirPooling. Each Rician fading sub-channel is inverted subject to the power constraint~\cite{Goldsmith2005}. The integer number of bits modulated into a symbol is adjusted according to the channel spectrum efficiency. The latency (in seconds) of the digital air interface scheme is then given by
    \begin{equation}
        \label{eqn: dpi_latency}
        L_{\sf digital} = \frac{KNQ}{B \log_2\left( 1 + \textrm{SNR}_{\textrm rx} K \right) },
    \end{equation}
    where $\textrm{SNR}_{\textrm rx}$ is given in~\eqref{eqn: experiment_rsnr}.

\subsection{Max-Pooling versus Average-Pooling}
We substantiate the need of designing Max-AirPooling using experimental results. 
To this end, putting the effect of wireless channels aside, the inherent superiority of Max-Pooling over Average-Pooling in distributed sensing is demonstrated based on the MVCNN model assuming reliable links and real datasets.
The performance comparison in terms of classification accuracy (averaged over three trials) is provided in Table~\ref{table: test_acc_modelnets}.
One can see that Max-AirPooling performs uniformly better than Average-Pooling over all considered datasets, which justifies the former’s popularity in practice.

\subsection{Parametric Optimization of Max-AirPooling}
\label{subsec: experiment_results}

\begin{figure*}[t]
\centering
\subfigure[AirPooling error and classification accuracy]{\includegraphics[height=5.8cm]{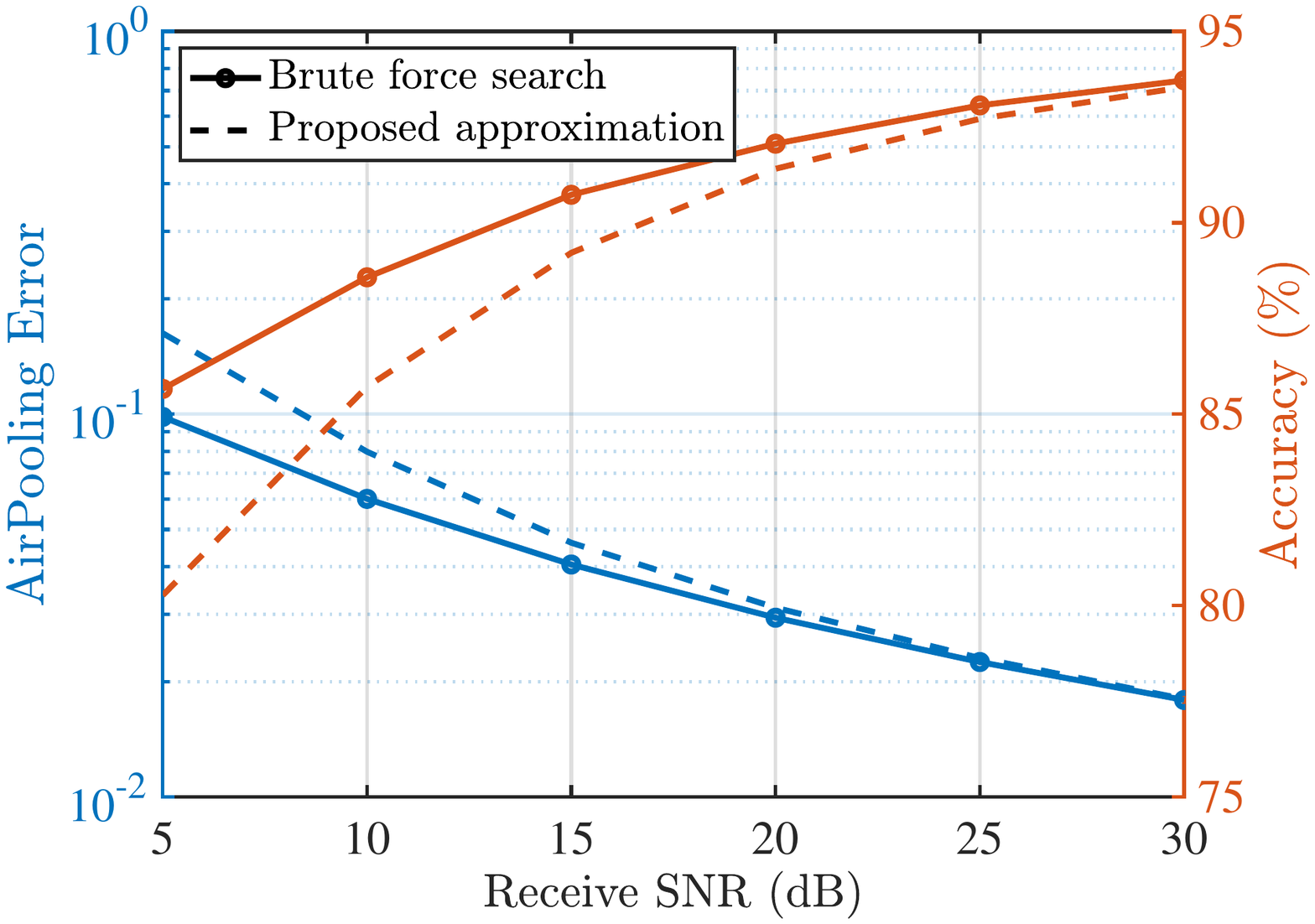}\label{fig: snn-max-rician-mse-acc}}
\hspace{0.5cm}
\subfigure[Optimal configuration parameter $\alpha$]{\includegraphics[height=5.8cm]{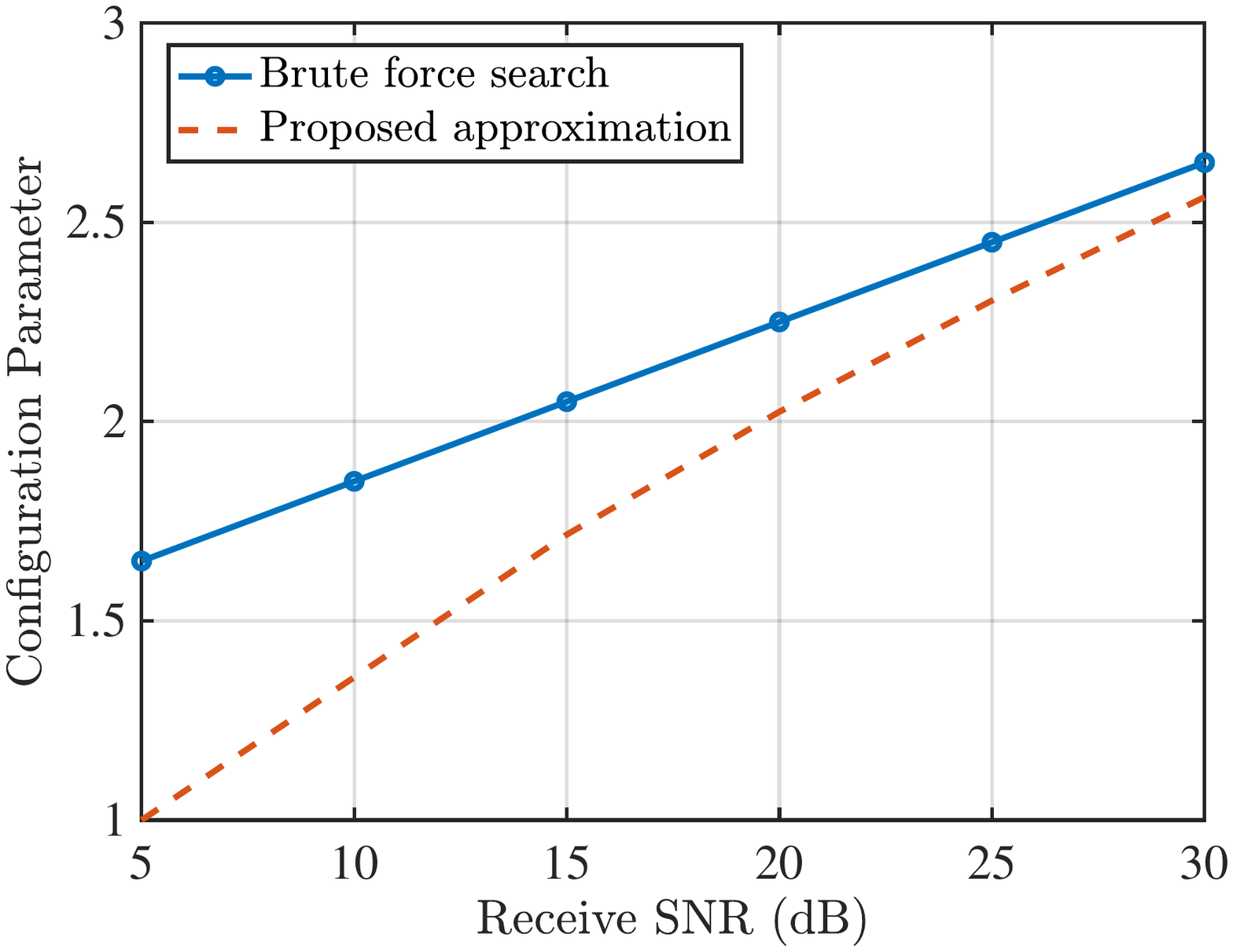}\label{fig: snn-max-rician-alpha}}
\caption{Performance comparison between the proposed sub-optimal method for setting the configuration parameter ($\alpha$) for Max-AirPooling and the optimal approach via a brute-force search in terms of (a) AirPooling error, classification accuracy, and (b) computed parametric value. }
\label{fig: snn-max-rician}

\end{figure*}

We evaluate the performance of the proposed near-optimal method for setting the configuration parameter for Max-AirPooling in Theorem 2. The synthetic feature dataset is used for its useful manipulability  and hence the SNN as the server classifier. The baseline approach is a brute-force linear search for the optimal parameter that minimizes the empirical AirPooling error. In Fig.~\ref{fig: snn-max-rician-mse-acc}, we plot the curves of AirPooling error and classification accuracy versus receive SNR. It can be observed that near-optimal performance is achieved by the proposed approach, which can be attributed to the tightness of the derived AirPooling-error upper bound on the classification error. In addition, the opposite trend of AirPooling error and accuracy curves is aligned with the derived relation betwen  AirPooling error and accuracy, which allows the former's minimization to substitute the latter's maximization. Fig.~\ref{fig: snn-max-rician-alpha} compares the configuration parameter values  obtained using the proposed sub-optimal method in Theorem 2 and brute-force search. It can be seen that there exists an optimality gap due to the use of bounds and approximation in the method, which, however,  narrows as the transmit SNR grows. 

\begin{figure}[t]
    \centering
    \includegraphics[scale=0.42]{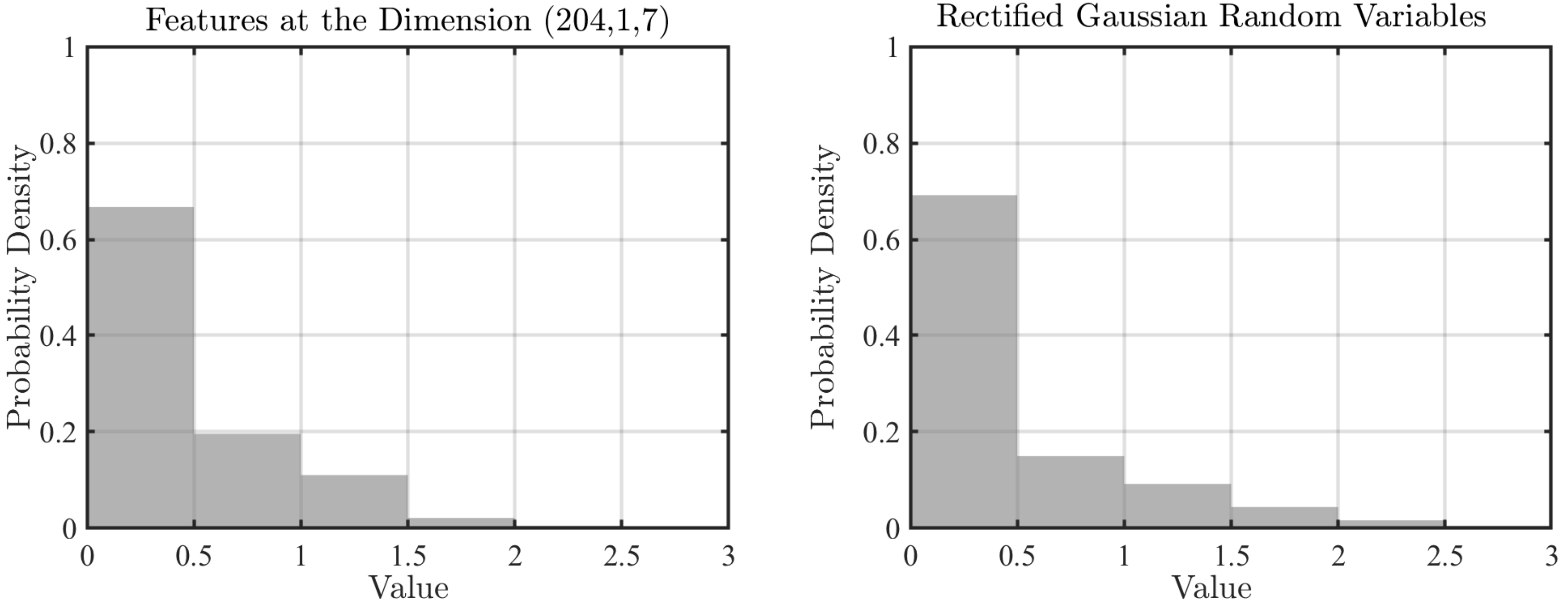}
    \caption{
    Distributions of the feature dimension $(247, 1, 7)$ on ModelNet10 and rectified Gaussian random variables.}
    \label{fig: feature-hists}
\end{figure}

\subsection{Performance of Max-AirPooling}
We first evaluate the performance of Max-AirPooling. It is essential to address the issue of optimizing the configuration parameter for the MVCNN model and real-world datasets. The feature distributions on ModelNet10 and ShapeNet are not exactly the rectified Gaussian distribution~\eqref{eqn: rectified_gaussian_pdf} used to derive the control algorithms. 
However, features distributions on real-world datasets are well-known to be in the shape of \emph{sparse activation}, i.e., being (near) zero with high probability, and being high-magnitude with low probability~\cite{Sun2021NeurIPS}. In that sense, the rectified Gaussian distribution is similar to real-world distributions (see Fig.~\ref{fig: feature-hists} for illustration). Therefore, for Max-AirPooling we propose a simple yet effective \emph{linear} calibration to transform $\alpha^{*}$ optimized for the rectified Gaussian distribution (derived using Theorem~\ref{optimized_alpha_no_selection}) to that used for ModelNet or ShapeNet, $\alpha^{\dagger}=c_1 \alpha + c_2$. The constants $c_1$ and $c_2$ are  fine-tuned via minimizing the expected AirPooling error.

\subsubsection{Sensing Accuracy and AirPooling Error}
Consider  Max-AirPooling over Rician channels on ModelNet10 and ShapeNet. The curves of classification accuracy and AirPooling error against receive SNR levels are plotted in Figs.~\ref{fig: mn-max-rician}. We choose a sufficiently high quantization resolution $Q=32$ bits per feature (i.e., full precision) for digital air interface. The benchmark achieves a maximum accuracy $91.74\%$ on ModelNet and a maximum accuracy $94.35\%$ on ShapeNet. In comparison,  the accuracy loss of Max-AirPooling is not higher than $0.61\%$ and $0.84\%$ at a moderate SNR (e.g., $6$ dB), respectively, while it is as low as {$0.44\%$ and $0.46\%$} at a high SNR (e.g., $20$ dB). Such marginal loss demonstrates the robustness of Max-AirPooling. Next, 
comparing Figs.~\ref{fig: mn-max-rician}(a) and~\ref{fig: mn-max-rician}(b), one can observe that the accuracy is insensitive to changes on AirPooling error when it is lower than $0.11$ on ShapeNet. It can be explained using  the concept of classification margin that the margin in this model tolerates AirPooling error not larger than $0.11$. When the AirPooling error goes beyond $0.17$ as the receive SNR is lower than $5$ dB, the accuracy suffers from notable degradation. A similar observation can also be made on ModelNet10. The above observations show the effectiveness of using AirPooling error as a surrogate metric of accuracy for AirPooling control.

\begin{figure*}[t]
\centering
\subfigure[Classification Accuracy]{\includegraphics[height=5.8cm]{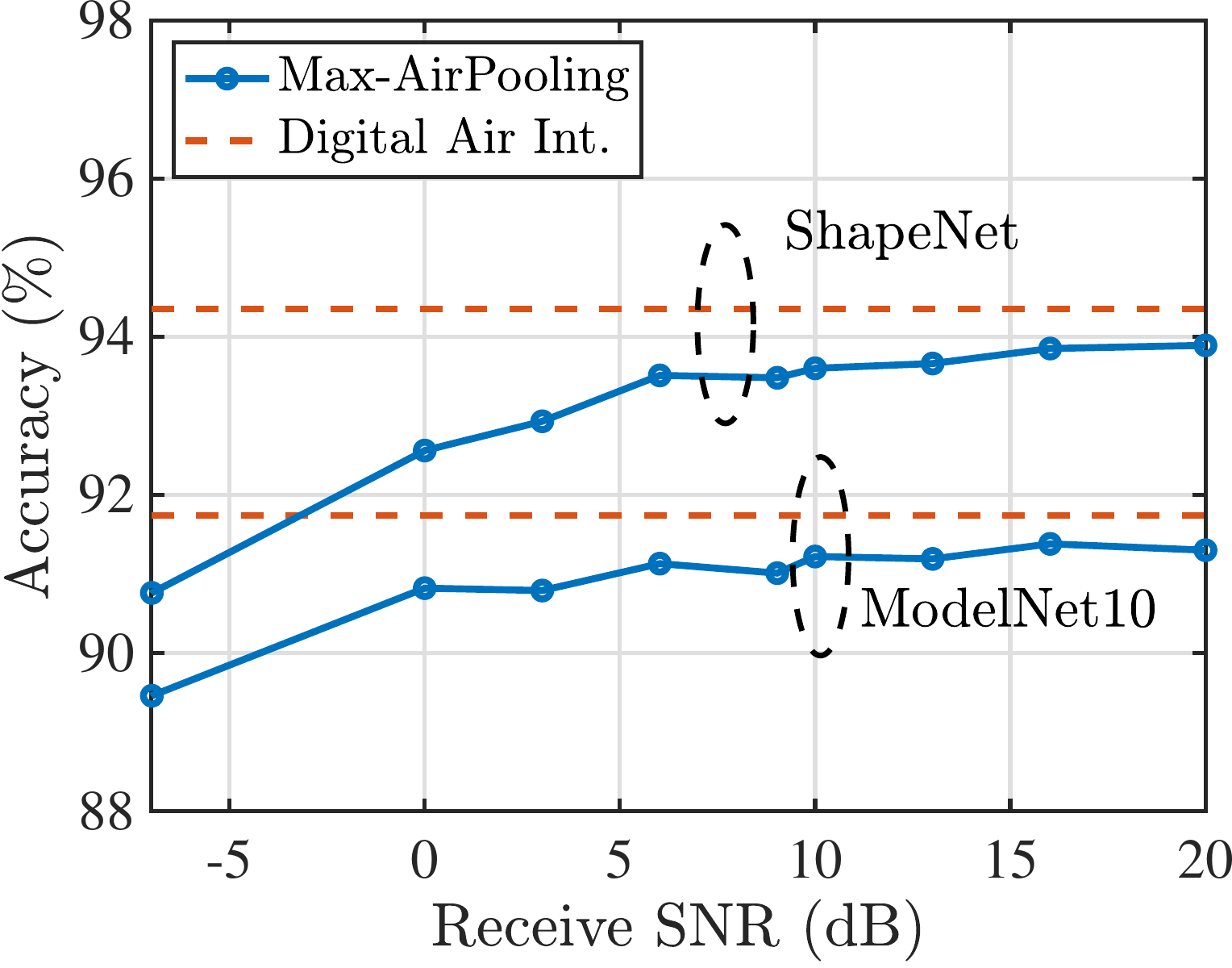}\label{fig: mn-max-rician-acc}}
\hspace{1cm}
\subfigure[AirPooling Error]{\includegraphics[height=5.8cm]{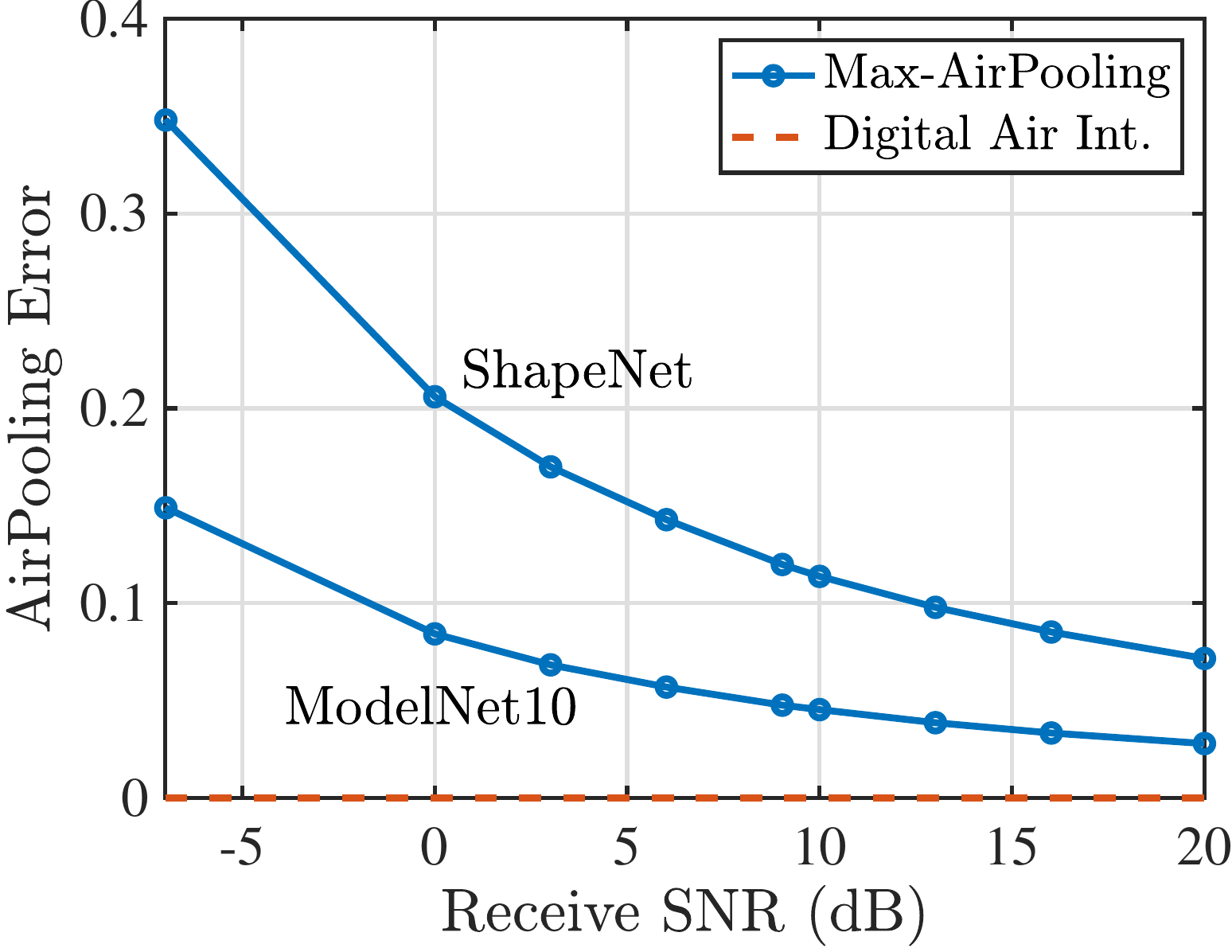}\label{fig: mn-max-rician-mse}}
\caption{Performance comparison between Max-AirPooling and the benchmark digital air interface at the resolution $Q=32$ bit over Rician channels on ModelNet10 and ShapeNet datasets for (a) inference accuracy and (b) AirPooling error. }
\label{fig: mn-max-rician}

\end{figure*}

\begin{figure*}[t]
\centering
\subfigure[ModelNet10]{\includegraphics[height=5.8cm]{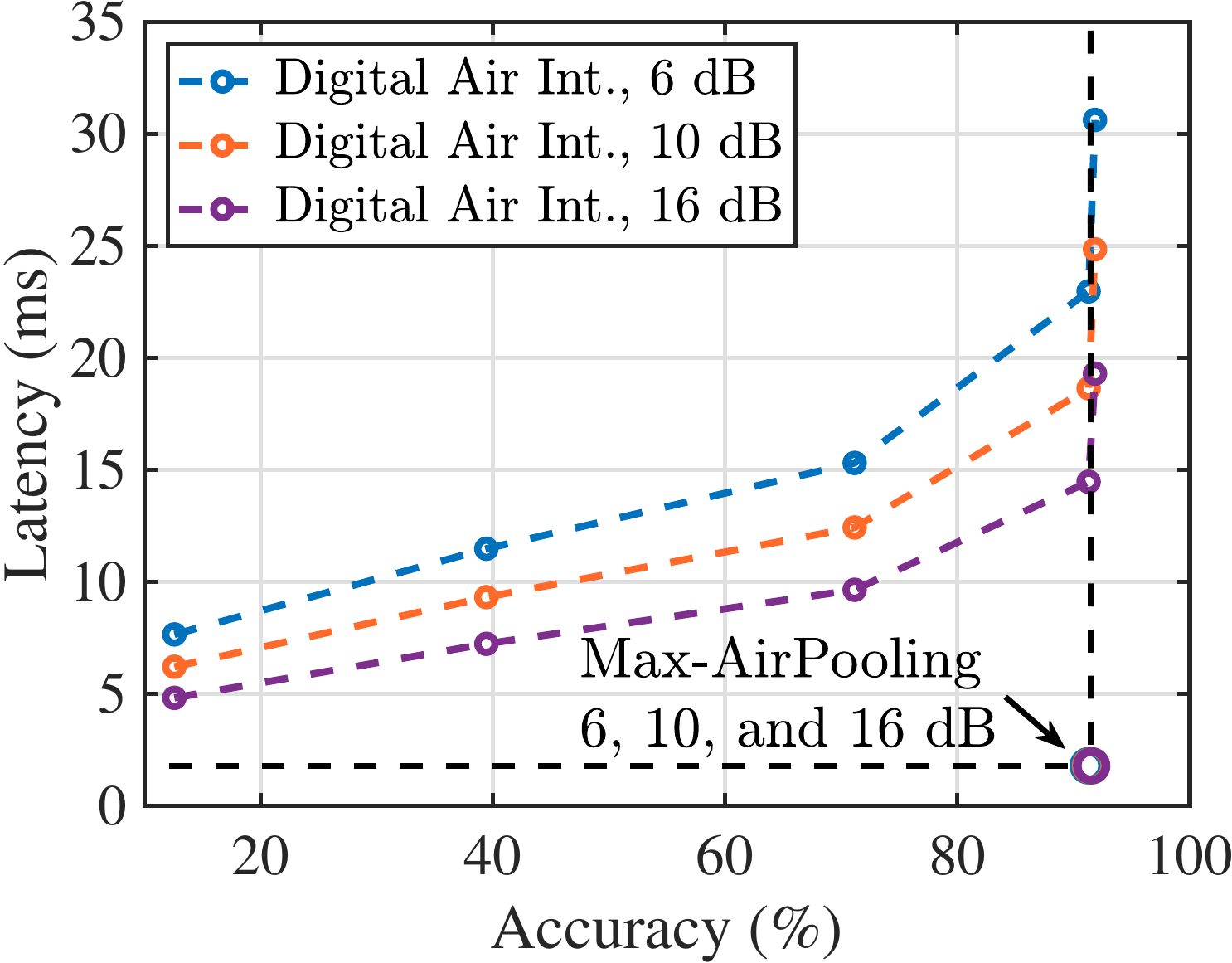}\label{fig: mn-max-rician-pareto}}
\hspace{1cm}
\subfigure[ShapeNet]{\includegraphics[height=5.8cm]{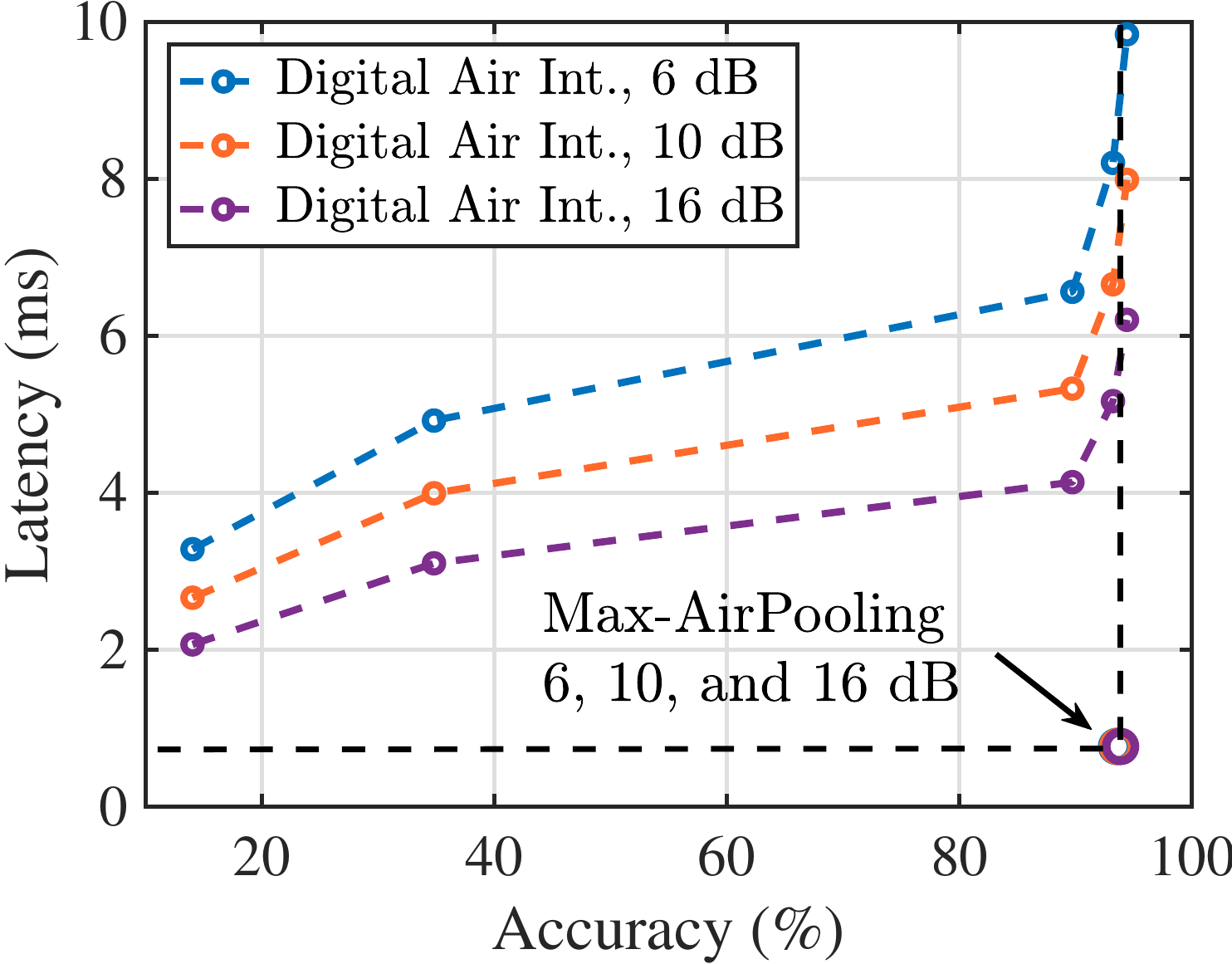}\label{fig: sn-max-rician-pareto}}
\caption{Comparison of communication latency (in \emph{millisecond} (ms)) between Max-AirPooling and the benchmarking scheme with varying target accuracies and receive SNR levels (in dB) on (a) ModelNet10 and (b) ShapeNet datasets. }
\label{fig: max-pareto}

\end{figure*}

\subsubsection{Communication Latency}
Define the communication latency as the transmission time (in milliseconds) required to achieve a target accuracy level. Extracted by VGG11 in MVCNN models, the size of feature tensors at each sensor is $512\times7\times7$ and $512\times4\times4$ for ModelNet10 and ShapeNet, respectively. By neglecting the approximately constant feature dimensions (i.e., the feature dimensions with variance lower than ${10}^{-12}$ and, e.g., those being always close-to-zero), 
the number of features to be aggregated over-the-air is $N=17911$ for ModelNet10 and $N=7675$ for ShapeNet. While attaining comparable accuracies as the traditional  digital air interface,
the superiority of Max-AirPooling is reflected in dramatic latency reduction as shown by the results in Fig.~\ref{fig: max-pareto}. For Max-AirPooling, the latency is fixed as given by $L_{\sf AirPooling} = \frac{N}{B}$. In the case of digital air interface, the latency not only depends on the power budget but also varies with channel fading; the expected latency is taken over the channel distribution.
Consider a target accuracy of $91\%$ which requires a resolution of $Q=6$ bit per feature for digital air interface. The resultant latencies are $22.90,\ 18.64,\ \text{and}\ 14.47$ milliseconds given receive SNRs at $6$ dB, $10$ dB, and $16$ dB, respectively.  Max-AirPooling reduces the latency to $1.79$ millisecond, achieving a reduction ratio of $12$, $10$, and $8$ with respect to the digital counterpart, respectively. Similar results can be observed on ShapeNet (see  Fig.~\ref{fig: sn-max-rician-pareto}). For instance, $10$-time latency reduction can be obtained at the  receive SNR of $6$ dB and a target accuracy of $93\%$.  Moreover, this latency advantage of Max-AirPooling is especially large in the low transmit-power regime. In this regime, AirPooling benefits more significantly from the model robustness (due to classification margin) against noise and the noise suppression capability of the aggregation operation.

\begin{figure*}[t]
\centering
\subfigure[ModelNet10]{\includegraphics[height=5.8cm]{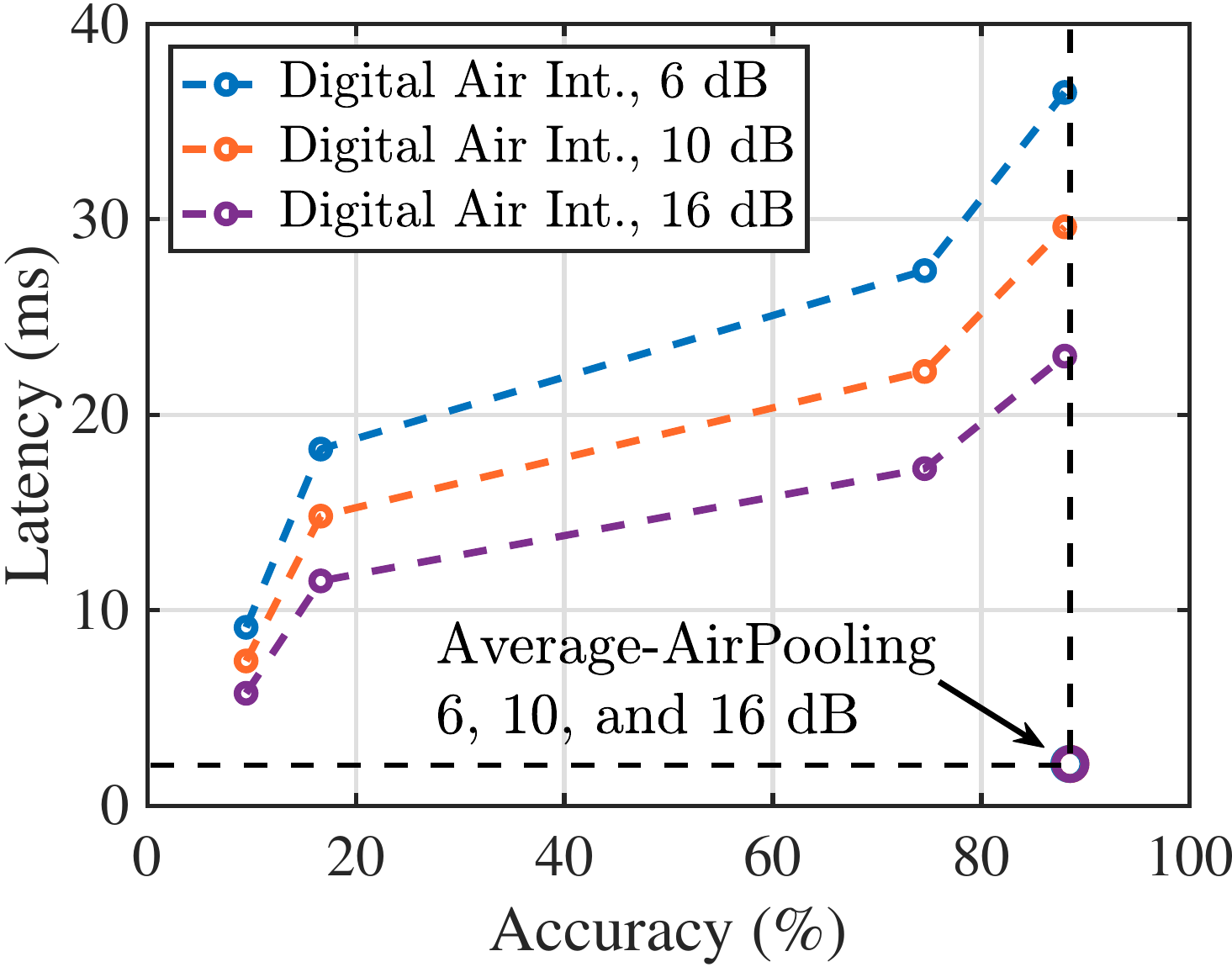}\label{fig: mn-avg-rician-pareto}}
\hspace{1cm}
\subfigure[ShapeNet]{\includegraphics[height=5.8cm]{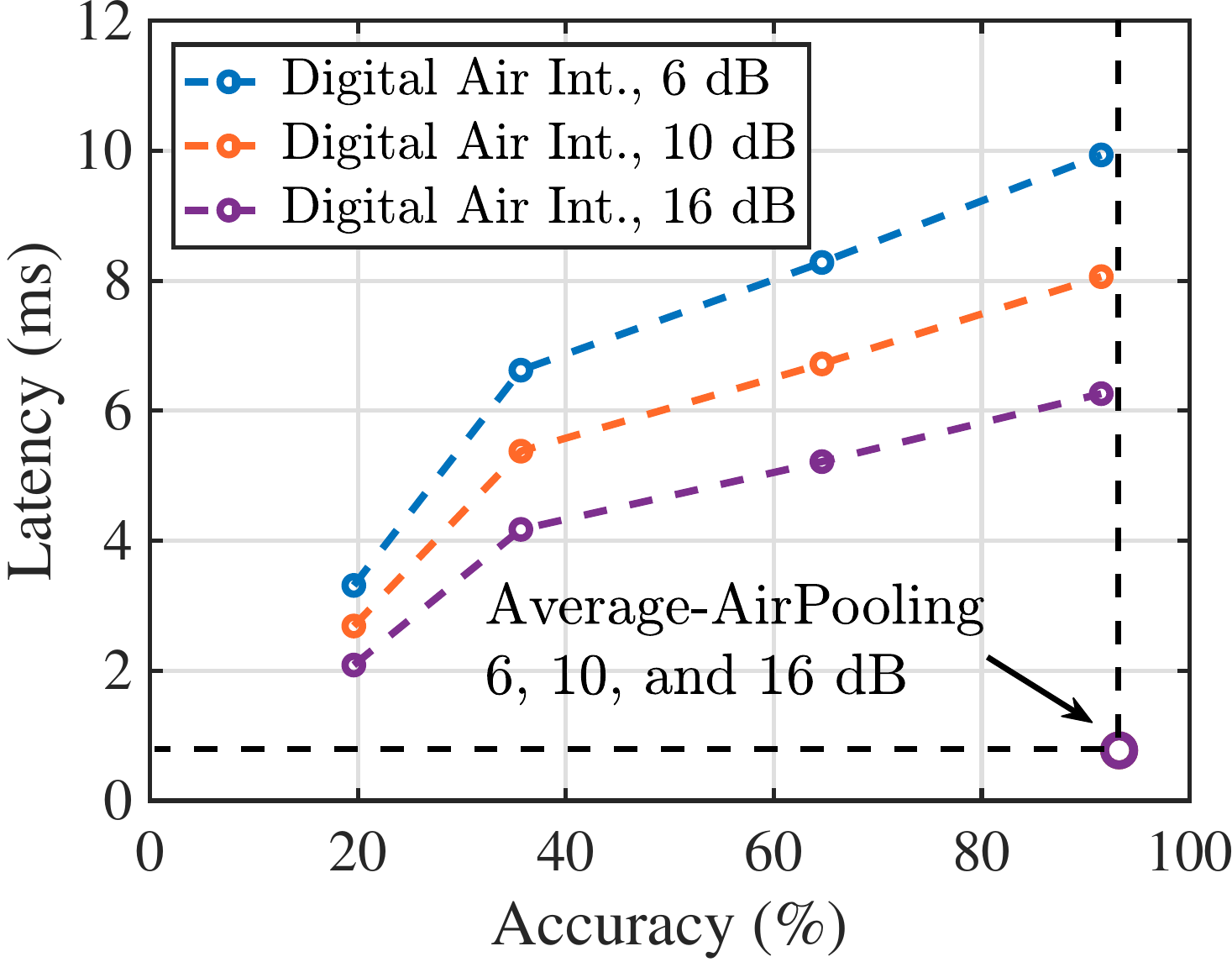}\label{fig: sn-avg-rician-pareto}}
\caption{Comparison of communication latency (in millisecond (ms)) between Average-AirPooling and the benchmarking scheme with varying target accuracies and receive SNR levels (in dB) on (a) ModelNet10 and (b) ShapeNet datasets. }
\label{fig: mn-pareto}

\end{figure*}

\subsection{Performance of Average-AirPooling}
As for Average-AirPooling, the configuration parameter is fixed as $\alpha^{*}=1$ (see Proposition~\ref{proposition: avg-airpooling}). Similar to the Max-AirPooling counterpart, Average-AirPooling achieves comparable accuracies as the digital air interface on both the ModelNet10 and ShapeNet datasets. As shown in Figs.~\ref{fig: mn-avg-rician-pareto} and~\ref{fig: sn-avg-rician-pareto}, Average-AirPooling achieves $88.44\%$ and $93.21\%$ with receive SNR as low as $6$ dB on ModelNet10 and ShapeNet, respectively. These indicate a performance loss of less than $1\%$ with respect to the ideal, errorless pooling results in Table~\ref{table: test_acc_modelnets} (or equivalently the digital counterpart with a high quantization resolution). The experiments on communication latency  are also conducted  for Average-AirPooling. For the digital air interface, varying the quantization resolution yields a set of pairs of latency and accuracy given certain receive SNR levels. Targetting on $88\%$ accuracy on ModelNet10, Average-AirPooling reduces the latency by more than $10$ times. On ShapeNet, Average-AirPooling reduces latency by a factor of $8$ to $10$ for the target  accuracy of  $93\%$. The above results also validate the effectiveness of Average-AirPooling and its superiority in air latency for distributed sensing like Max-AirPooling.

\section{Concluding Remarks}
\label{sec: conclusion}
In this paper, we have presented the framework of AirPooling for communication-efficient distributed sensing. AirPooling exploits controlled vector norms and the waveform superposition property of wireless channels to support a wide variety of fusion functions, especially the popular max- and average-pooling over multiple access channels. The fundamental tradeoff between the suppression of function-approximation error and mitigating channel noise perturbation is revealed and optimized. Comprehensive experiments justify the efficiency gain of AirPooling in terms of the latency to achieve a target accuracy or vice versa.  

From a high-level perspective, this work initializes an important step toward task-oriented air interfaces for distributed sensing, opening a line of follow-up studies. One direction centers on spatially multiplexing AirPooling via advanced \emph{multiple-input multiple-output} (MIMO) techniques. It involves the joint optimization of precoders at sensors and beamformers at servers. In view of E2E performance metrics like sensing accuracy and network lifetime, another direction is to design task-oriented \emph{radio-resource management} (RRM) techniques for AirPooling. The goal aims at maximizing the E2E performance, especially for sensing networks with heterogeneous sensor capabilities and classification margins.

\appendix

\subsection{Proof of Lemma~\ref{lemma: post_processing_factor}}
\label{proof: best_beta}
For Average-AirPooling, setting $\beta=K$ and $\alpha=1$ yields $\tilde{g}=g_{\mathsf{avg}}$ and is hence optimal. For Max-AirPooling, the noise-free AirPooling error is given by
\begin{align}
    \mathbb{E}[(\tilde{g}-g_{\mathsf{max}})^2]
    & = \mathbb{E}\left[\left(\frac{1}{\beta^{1/\alpha}}\Vert\bff\Vert_{\alpha}-f_{\max}\right)^2\right]\notag\\
    & \triangleq \mathbb{E}\left[\left(u\Vert\bff\Vert_{\alpha}-f_{\max}\right)^2\right]\notag\\
    & = u^2\mathbb{E}\left[\Vert\bff\Vert_{\alpha}^2\right]-2u\mathbb{E}\left[\Vert\bff\Vert_{\alpha}f_{\max}\right]+\mathbb{E}\left[f_{\max}^2\right]\label{eq: minimize_amse}
\end{align}
 where the third equality is by substitution with  $u=\left(\frac{1}{\beta}\right)^{1/\alpha}$. It is then obvious that $\mathbb{E}[(\tilde{g}-g_{\mathsf{max}})^2]$ achieves global minimum at $u=u^*$, where $u^*$ is given by
 \begin{equation}\label{eq: optimal_u}
     u^*=\frac{\mathbb{E}\left[\Vert\bff\Vert_{\alpha}f_{\max}\right]}{\mathbb{E}\left[\Vert\bff\Vert_{\alpha}^2\right]},
 \end{equation}
 which proves the optimality of $\beta^*$.
 Further, noting that $0\leq f_k\leq f_{\max}$, the following inequalities hold
 \begin{equation}
    f_{\max} \leq \Vert\bff\Vert_{\alpha} \leq K^{1/\alpha} f_{\max}.
 \end{equation}
 Thus, by multiplying with $\Vert\bff\Vert_{\alpha}$ and taking expectation on both sides of the inequalities, we obtain the following inequality
 \begin{equation}
    \mathbb{E}\left[f_{\max}\Vert\bff\Vert_\alpha  \right]\leq\mathbb{E}\left[\Vert\bff\Vert^2_{\alpha}\right] \leq K^{1/\alpha} \mathbb{E}\left[f_{\max}\Vert\bff\Vert_\alpha\right].
 \end{equation}
 Substituting the inequality into \eqref{beta_n} gives the said bound of $\beta^*$. This completes the proof. 

\subsection{Proof of Theorem~\ref{theorem: approximation_capability}}
\label{app: proof_approximation_capability}
To begin with, we note that the chosen post-processing parameter for max-pooling as in~\eqref{eqn: approximation_capability} satisfies $1\leq\beta^*\leq K$. Since $0\leq f_k\leq \max{\{f_k\}}$ for all $k=1,\ldots,K$, an upper bound of $\tilde{g}$ is hence given by
    \begin{equation}
        \tilde{g} = \left(\frac{1}{\beta^*}\sum_{k=1}^K f^{\alpha}_k\right)^{1/{\alpha}}\leq \left(\sum_{k=1}^K f^{\alpha}_k\right)^{1/{\alpha}} \leq (K\max{\{f_k\}}^\alpha)^{1/\alpha}=K^{1/\alpha}\max{\{f_k\}}.
    \end{equation}
    For the same reason, a lower bound is given by
    \begin{equation}
        \tilde{g} = \left(\frac{1}{\beta^*}\sum_{k=1}^K f^{\alpha}_k\right)^{1/{\alpha}}\geq \left(\frac{1}{K}\sum_{k=1}^K f^{\alpha}_k\right)^{1/{\alpha}} \geq \left(\frac{1}{K}\max{\{f_k\}}^\alpha\right)^{1/\alpha}=\frac{1}{K^{1/\alpha}}\max{\{f_k\}}.
    \end{equation}
It can be observed that both the upper and lower bounds of $\tilde{g}$ converge to $g_{\mathsf{max}}=\max{\{f_k\}}$ given $\alpha\rightarrow\infty$. The well-known sandwich theorem then yields the first relation. Furthermore, by comparing $ \tilde{g} = \left(\frac{1}{\beta}\sum_{k=1}^K f^{\alpha}_k\right)^{1/{\alpha}}$ to \eqref{perfect-pooling}, the second relation obviously holds. This completes the proof.

\subsection{Proof of Lemma~\ref{lemma: acc_lb}}\label{proof: acc_lb}
The first inequality follows directly from the fact that the perturbed feature vector $\hat{\bg}$ is correctly classified if $d(\bg,\hat{\bg})=\Vert\hat{\bg}-\bg\Vert_2=\Vert\bee\Vert_2<\Delta$. Using the well-known \emph{Markov's inequality}, the second inequality is derived as 
\begin{align}
    R_0\mathsf{Pr}[\Vert\bee\Vert_2<\Delta]&=R_0\mathsf{Pr}[\Vert\bee\Vert_2^2<\Delta^2]\notag\\
    &\geq R_0\left(1-\frac{\mathbb{E}[\Vert\bee\Vert_2^2]}{\Delta^2}\right).
\end{align}
Note that $\mathbb{E}[\Vert\bee\Vert_2^2]=\sum_{n=1}^N \mathbb{E}[|\hat{g}_n-g_n|^2]=\sum_{n=1}^N D_n$. This completes the proof.

\subsection{Proof of Lemma~\ref{proposition_decompose}}
\label{app: proof_proposition}
In Lemma~\ref{proposition_decompose}, the upper bound for max-pooling is established as
\begin{align}
    D(\alpha) & = \mathbb{E}[|\hat{g}-g|^2] \notag\\
    & = \mathbb{E}[|\hat{g}-\tilde{g}-({g}-\tilde{g})|^2] \notag\\
    & = \mathbb{E}[|\hat{g}-\tilde{g}|^2+2(\hat{g}-\tilde{g})(\tilde{g}-g)+|\tilde{g}-g|^2]\notag\\
    & \leq 2\mathbb{E}[|\hat{g}-\tilde{g}|^2]+2\mathbb{E}[|\tilde{g}-g|^2]\notag\\
    & = 2 \left[ D_{\sf chan}(\alpha) + D_{\sf appr}(\alpha)\right]\notag,
\end{align}
For average-pooling, the derivation is similar except that $\xi=\hat{g}-\tilde{g}$ and $(\tilde{g}-g)$ are independent and zero mean and thus $\mathbb{E}[(\hat{g}-\tilde{g})(\tilde{g}-g)]=0$. We then have $D(\alpha)=\mathbb{E}[|\hat{g}-\tilde{g}|^2]+\mathbb{E}[|\tilde{g}-g|^2]=   D_{\sf chan}(\alpha) + D_{\sf appr}(\alpha)$ for average-pooling.

\subsection{Proof of Lemma~\ref{lemma: mse_max_no_selection}}
\label{app: proof_lemma_mse_max_noselection}
The term $D_{\sf chan} (\alpha) = \mathbb{E}[|\hat{g}-\tilde{g}|^2]$ can be bounded as
\begin{align}
    D_{\sf chan} (\alpha)  
    & = \mathbb{E}\left[\left|\left[\frac{1}{\beta}\left(\sum_{k=1}^K f^{\alpha}_k+\xi\right)^+\right]^{1/{\alpha}}-\left(\frac{1}{\beta}\sum_{k=1}^K f^{\alpha}_k\right)^{1/{\alpha}}\right|^2\right] \notag\\
    & \leq \mathbb{E}\left[\left(\frac{1}{\beta}|\xi|^{1/\alpha}\right)^2\right] \notag \\
    & \leq \mathbb{E}[(|\xi|^{1/\alpha})^2] \notag \\
    & = \left(\frac{\sigma^2\nu^2}{P_\mathsf{rx}}\right)^{1/\alpha}\mathbb{E}[(|w|^2)^{1/\alpha}] \notag \\
    & \leq \left(\frac{\sigma^2\nu^2}{P_\mathsf{rx}}\right)^{1/\alpha}\triangleq {\delta},
\end{align}
where $w\sim \mathcal{N}(0,1)$, the first inequality is established by proving
    $\left\{\left[(a+b)^+\right]^{\frac{1}{\alpha}}-a^{\frac{1}{\alpha}}\right\}^2\leq|b|^{\frac{1}{\alpha}}$
for $a\geq 0$ and $b\in \mathbb{R}$ with the details omitted for brevity, and the last inequality is via Jensen's inequality. 
Next, we derive the upper bound of $D_{\sf appr}(\alpha)$ in Max-AirPooling, given that $\beta=\beta^*$ as defined in Proposition 1. By substituting \eqref{eq: optimal_u} into \eqref{eq: minimize_amse}, $D_{\sf appr}(\alpha)$ is bounded as
\begin{align}
    D_{\sf appr}(\alpha)&=\mathbb{E}\left[f_{\max}^2\right]-\frac{\mathbb{E}\left[\Vert\bff\Vert_{\alpha}f_{\max}\right]^2}{\mathbb{E}\left[\Vert\bff\Vert_{\alpha}^2\right]}\notag\\
    &\leq\mathbb{E}\left[f_{\max}^2\right]-\frac{\mathbb{E}\left[f^2_{\max}\right]\mathbb{E}\left[\Vert\bff\Vert_{\alpha}f_{\max}\right]}{\mathbb{E}\left[\Vert\bff\Vert_{\alpha}^2\right]}\notag\\
    &\leq \mathbb{E}\left[f_{\max}^2\right]\left(1-\frac{\mathbb{E}\left[\Vert\bff\Vert_{\alpha}f_{\max}\right]}{K^{\frac{1}{\alpha}}\mathbb{E}\left[\Vert\bff\Vert_{\alpha}f_{\max}\right]}\right) = \mathbb{E}\left[f_{\max}^2\right]\left(1-K^{-\frac{1}{\alpha}}\right)\triangleq \epsilon_\mathsf{m}.
\end{align}
The proof for Average-AirPooling is trivial and omitted for brevity. Then, using Lemma~\ref{proposition_decompose} the said AirPooling error upper bound can be obtained.

\subsection{Proof of Lemma~\ref{lemma: delta_m_rate}}\label{proof_lemma_delta_m_asym}
Starting from \eqref{eqn: delta_m_gamma}, we have
\begin{align}
    {\delta}
    &= 2\left[\frac{\sigma^2}{2\bar{P}\sqrt{\pi}}\Gamma\left(\frac{2\alpha+1}{2}\right)\right]^{1/\alpha}\left[1-\frac{\Gamma^2\left(\frac{\alpha+1}{2}\right)}{2\sqrt{\pi}\Gamma\left(\frac{2\alpha+1}{2}\right)}\right]^{1/\alpha}\notag \\
    &\triangleq 2\left[\frac{\sigma^2}{2\bar{P}\sqrt{\pi}}\Gamma\left(\frac{2\alpha+1}{2}\right)\right]^{1/\alpha}\left[1-g(\alpha)\right]^{1/\alpha},\label{eqn: proof_lemma5_1}
\end{align}
where $g(\alpha)=O\left(\frac{1}{2^\alpha}\right)$ can be obtained using the Stirling's approximation of Gamma function, i.e., $\Gamma(z)= \sqrt{2\pi}z^{z-1/2}e^{-z}\left(1+O\left(\frac{1}{z}\right)\right)$. Expanding $\left[1-g(\alpha)\right]^{1/\alpha}$ with Taylor series gives
\begin{equation}
    \left[1-g(\alpha)\right]^{1/\alpha} = 1-\frac{g(\alpha)}{\alpha}+o\left(\frac{g(\alpha)}{\alpha}\right)= 1+O\left(\frac{1}{ \alpha 2^\alpha}\right).\label{eqn: proof_lemma5_2}
\end{equation}
Substituting \eqref{eqn: proof_lemma5_2} into \eqref{eqn: proof_lemma5_1} gives
\begin{align}
    {\delta}
    &= 2\left[\frac{\sigma^2}{2\bar{P}\sqrt{\pi}}\Gamma\left(\frac{2\alpha+1}{2}\right)\right]^{1/\alpha}\left[1+O\left(\frac{1}{ \alpha 2^\alpha}\right)\right]\notag\\
    &= 2\left[\frac{\sigma^2}{2\bar{P}\sqrt{\pi}}\Gamma\left(\frac{2\alpha+1}{2}\right)\right]^{1/\alpha} + O \left(\frac{1}{2^\alpha}\right),
\end{align}
where the second equality is due to $\left[\frac{\sigma^2}{2\bar{P}\sqrt{\pi}}\Gamma\left(\frac{2\alpha+1}{2}\right)\right]^{1/\alpha}=O(\alpha)$ as obtained via the Stirling's approximation. This proves the first equality of Lemma 5. From the second equality, we proceed with
\begin{align}
    {\delta}
    &= 2\left\{\frac{\sigma^2}{2\bar{P}\sqrt{\pi}}\sqrt{2\pi}\left(\alpha+\frac{1}{2}\right)^{\alpha}e^{-(\alpha+1/2)}\left[1+O\left(\frac{1}{\alpha}\right)\right]\right\}^{1/\alpha} + O \left(\frac{1}{2^\alpha}\right)\nonumber\\
    &=2e^{-1}\left(\frac{\sigma^2}{\sqrt{2}\bar{P}}\right)^{1/\alpha}\left(\alpha+\frac{1}{2}\right)e^{-1/2\alpha}\left[1+O\left(\frac{1}{\alpha}\right)\right]^{1/\alpha} + O \left(\frac{1}{2^\alpha}\right)\nonumber\\
    &=2e^{-1}\left(\frac{\sigma^2}{\sqrt{2}\bar{P}}\right)^{1/\alpha}\left(\alpha+\frac{1}{2}\right)\left[1-\frac{1}{2\alpha}+O\left(\frac{1}{\alpha^2}\right)\right]\left[1+O\left(\frac{1}{\alpha^2}\right)\right] + O \left(\frac{1}{2^\alpha}\right)\nonumber\\
    &=2e^{-1}\left(\frac{\sigma^2}{\sqrt{2}\bar{P}}\right)^{1/\alpha}\left[\alpha+O\left(\frac{1}{\alpha}\right)\right] + O \left(\frac{1}{2^\alpha}\right)\nonumber\\
    &=2e^{-1}\left(\frac{\sigma^2}{\sqrt{2}\bar{P}}\right)^{1/\alpha}\alpha + O \left(\frac{1}{\alpha}\right),
\end{align}
where the third equality is by Taylor expansion. This completes the proof. 

\subsection{Proof of Theorem~\ref{optimized_alpha_no_selection}}\label{proof_optimal_alpha}
Consider equation \eqref{eqn: noselection_suboptimal_equation}. If it has a unique root $\alpha^*$, then via straightforward analysis of the first-order derivative this root minimizes $\tilde{\delta}_{\sf m}+{\epsilon}_{\sf m}$. Let $u=\frac{C}{\alpha}+A$ with $A<u\leq C+A$, where $C \triangleq \log \left(\frac{\sqrt{2}\bar{P}}{K\sigma^2}\right)$ and $A\triangleq\frac{C}{C+\log K}$. Substituting $\alpha=\frac{C}{u-A}$ into equation \eqref{eqn: noselection_suboptimal_equation} allows us to simplify the expression as
\begin{equation}\label{eqn: u_eqn}
    A\exp\left(u-A\right)\left(u-2A+\frac{A^2}{ u}\right)=\frac{2e^{-1}C^2}{\mathbb{E}[f_{\max}^2\vert K]\log{K}}. 
\end{equation}
To further simplify the equation, note that $\frac{A}{2u}=\frac{\alpha}{2(C+\log K +\alpha)}\ll 1$
holds in the high-SNR regime because $C+\log K=\log \left(\frac{\sqrt{2}\bar{P}}{\sigma^2}\right)$ is greater than $\alpha$ by at least an order of magnitude. Therefore, $2A\gg\frac{A^2}{ u}$ and we can omit the latter. A approximated form of \eqref{eqn: u_eqn} is then obtained as
\begin{equation}\label{eqn: u_approx_eqn}
    A\exp\left(u-A\right)\left(u-2A\right)=\frac{2e^{-1}C^2}{\mathbb{E}[f_{\max}^2\vert K]\log{K}}. 
\end{equation}
Next, let $v=u-2A$ with $-A<v\leq C-A\triangleq v_\mathsf{m}$. Inserting $u=v+2A$ into \eqref{eqn: u_approx_eqn} yields
\begin{equation}\label{eqn: v_eqn}
    e^v v = \frac{2C(C+\log K)}{\exp{\left(1+A\right)} \mathbb{E}[f_{\max}^2\vert K]\log{K}}\triangleq L. 
\end{equation}
Note that $J_0(v)\triangleq e^v v$ is monotonically increasing in $(0,v_{\sf m})$ and $J(0)=0$.  Also, we have $J_0(v_\mathsf{m})>L$ for any $C>\frac{1}{2}\log2$ and $K\geq 4$.
Therefore, equation \eqref{eqn: v_eqn} has a unique root on $(0,v_{\sf m})$. The root can be expressed with the principal branch of Lambert $W$ function, which is the inverse function of $J_0(v)$, as given by $v^*=W_0(L)$. Finally,  evaluating $\alpha^*=\frac{C}{v^*+\frac{C}{C+\log K}}$ yields the desired result.

\bibliographystyle{IEEEtran}

\end{document}